\renewcommand{\le}{\leqslant}
\renewcommand{\ge}{\geqslant}
\renewcommand{\Pr}{\text{Pr}}
\newcommand{\N}{\mathbb{N}}
\newcommand{\F}{\mathbb{F}}
\newcommand{\vrk}[1]{\text{vrk}(#1)}
\newcommand{\rk}[2]{\text{rk}_{{#1}}({#2})}
\newcommand{\tensor}{\otimes}
\newcommand{\rank}{\text{rank}}
\newcommand{\abs}[1]{\left|#1\right|}
\renewcommand{\rank}{\text{rank}}
\renewcommand{\Pr}[1]{\textbf{Pr}\left[#1\right]}
\newcommand\restr[2]{{
  \left.\kern-\nulldelimiterspace 
  #1 
  \vphantom{|} 
  \right|_{#2} 
  }}
\newtheorem{theorem}{Theorem}
\newtheorem{prop}[theorem]{Proposition}
\newtheorem{lemma}[theorem]{Lemma}
\newtheorem{corollary}[theorem]{Corollary}
\newtheorem{defn}{Definition}
\newtheorem{claim}{Claim}
\newenvironment{claimproof}[1]{\par\noindent\underline{Proof:}\space#1}{\hfill $\blacksquare$\medskip}
\begin{document}
    
\title{Visible Rank and Codes with Locality\thanks{An earlier version this work was presented at 2021 RANDOM/APPROX~\cite{AG-random21}. The current version includes Theorem~\ref{thm:tensor-gaps}, which is a solution to Question 2 that was asked in~\cite{AG-random21}. This work was done when both authors were affiliated with Carnegie Mellon University.}\thanks{Research supported in part by NSF grants CCF-1814603 and CCF-1908125 and a Simons Investigator award.}}
\author{Omar Alrabiah\thanks{Department of Electrical Engineering and Computer Science, University of California, Berkeley, CA 94704. Email: {\tt oalrabiah@berkeley.edu}} 
\and Venkatesan Guruswami\thanks{Department of Electrical Engineering and Computer Science, University of California, Berkeley, CA 94704. Email: {\tt venkatg@berkeley.edu}.}
}
    
\date{}
\maketitle
\thispagestyle{empty}

\begin{abstract}
We propose a framework to study the effect of local recovery requirements of codeword symbols on the dimension of linear codes, based on a combinatorial proxy that we call \emph{visible rank}. The locality constraints of a linear code are stipulated by a matrix $H$ of $\star$'s and $0$'s (which we call a "stencil"), whose rows correspond to the local parity checks (with the $\star$'s indicating the support of the check). The visible rank of $H$ is the largest $r$ for which there is a $r \times r$ submatrix in $H$ with a unique generalized diagonal of $\star$'s. The visible rank yields a field-independent combinatorial lower bound on the rank of $H$ and thus the co-dimension of the code. 

\smallskip
We point out connections of the visible rank to other notions in the literature such as unique restricted graph matchings, matroids, spanoids, and min-rank. In particular, we prove a rank-nullity type theorem relating visible rank to the rank of an associated construct called \emph{symmetric spanoid}, which was introduced by Dvir, Gopi, Gu, and Wigderson~\cite{DGGW20}. Using this connection and a construction of appropriate stencils, we answer a question posed in \cite{DGGW20} and demonstrate that symmetric spanoid rank cannot improve the currently best known $\widetilde{O}(n^{(q-2)/(q-1)})$ upper bound on the dimension of $q$-query locally correctable codes (LCCs) of length $n$. This also pins down the efficacy of visible rank as a proxy for the dimension of LCCs. 

\smallskip
We also study the $t$-Disjoint Repair Group Property ($t$-DRGP) of codes where each codeword symbol must belong to $t$ disjoint check equations. It is known that  linear codes with $2$-DRGP must have co-dimension $\Omega(\sqrt{n})$ (which is matched by a simple product code construction). We show that there are stencils corresponding to $2$-DRGP with visible rank as small as $O(\log n)$. However, we show the second tensor of any $2$-DRGP stencil has visible rank $\Omega(n)$, thus recovering the $\Omega(\sqrt{n})$ lower bound for $2$-DRGP. For $q$-LCC, however, the $k$'th tensor power for $k\le n^{o(1)}$ is unable to improve the $\widetilde{O}(n^{(q-2)/(q-1)})$ upper bound on the dimension of $q$-LCCs by a polynomial factor. Inspired by this and as a notion of intrinsic interest, we define the notion of \emph{visible capacity} of a stencil as the limiting visible rank of high tensor powers, analogous to Shannon capacity, and pose the question whether there can be large gaps between visible capacity and algebraic rank.

\end{abstract}

\newpage

\section{Introduction}

The notion of \emph{locality} in error-correcting codes refers to the concept of recovering codeword symbols as a function of a small number of other codeword symbols. Local decoding requirements of various kinds have received a lot of attention in coding theory, due to both their theoretical and practical interest. For instance, $q$-query locally correctable codes (LCCs) aim to recover any codeword symbol as a function of $q$ other codeword symbols in a manner robust to a constant fraction of errors. On the other hand, locally recoverable codes (LRCs), in their simplest incarnation, require each codeword symbol to be a function of some $\ell$ other codeword symbols, allowing local recovery from any single erasure.\footnote{There is also a distance requirement on LRCs to provide more global error/erasure resilience.}

LCCs have been extensively studied in theoretical computer science, and have connections beyond coding theory to topics such as probabilistically checkable proofs and private information retrieval. We refer the reader to~\cite{Yek12} and the introduction of~\cite{Gop18} for excellent surveys on LCCs and their connections. LRCs were motivated by the need to balance global fault tolerance with extremely efficient repair of a small number of failed storage nodes in modern large-scale distributed storage systems~\cite{GHSY12}. They have led to intriguing new theoretical questions, and have also had significant practical impact  with adoption in large scale systems such as Microsoft Azur~\cite{HSX+12} and Hadoop~\cite{SAP+13}.

Let us define the above notions formally, in a convenient form that sets up this work. We will restrict attention to linear codes in this work, i.e., subspaces $C$ of $\F_q^n$ for some finite field $\F_q$. In this case, the $i$'th symbol $c_i$ of every codeword $c=(c_1,\dots,c_n) \in C$ can be recovered as a function of the symbols $c_j$, for indices $j$ in a (minimal) subset $R_i \subset [n] \setminus \{i\}$, iff $c_i$ and $\{c_j \mid j \in R_i\}$ satisfy a linear check equation, or in other words, there is a dual codeword whose support equals $\{i\} \cup R_i$. The set $R_i$ is called a \emph{repair group} for the $i$'th codeword symbol (other terminology used in the literature includes regenerating sets and recovery sets).

The $q$-LCC property, for a fixed number of queries $q$ and growing $n$, corresponds to having $\Omega(n)$ disjoint groups of size $\le q$ for each position $i \in [n]$, or equivalently $\Omega(n)$ dual codewords of Hamming weight at most $(q+1)$ whose support includes $i$ and are otherwise disjoint. The $\ell$-LRC corresponds to having a dual codeword of Hamming weight at most $(\ell+1)$ whose support includes $i$, for each $i \in [n]$. A property that interpolates between these extremes of a single repair group and $\Omega(n)$ disjoint repair groups is the Disjoint Repair Group Property ($t$-DRGP) where we require $t$ disjoint repair groups for each position $i \in [n]$ (equivalently $t$ dual codewords whose support includes $i$ but are otherwise disjoint).

There is an exponentially large gap between upper and lower bounds on the trade-off between code dimension and code length for $q$-LCCs. The best known code constructions have dimension only $O((\log n)^{q-1})$ (achieved by generalized Reed-Muller codes or certain lifted codes~\cite{GKS13}), whereas the best known upper bound on the dimension of $q$-LCCs is much larger and equals $\widetilde{O}(n^{(q-2)/(q-1)})$~\cite{KT00,Woodruff12,IS20}\footnote{The $\widetilde{O}(\cdot)$ and $\widetilde{\Omega}(\cdot)$ are used to suppress factors poly-logarithmic in $n$.}. Narrowing this huge gap has remained open for over two decades. 

In contrast, the best possible dimension of a $\ell$-LRC is easily determined to be $\lfloor \tfrac{\ell n}{\ell+1}\rfloor$.\footnote{In this case, a more interesting trade-off is a Singleton-type bound that also factors in the distance of the code~\cite{GHSY12}.} However, for $t$-DRGP, there are again some intriguing mysteries. For $2$-DRGP, we have tight bounds---the minimum possible redundancy (co-dimension) equals $\sqrt{2n} \pm \Theta(1)$. The lower bound is established via very elegant proofs based on the polynomial method~\cite{Woo16} or rank arguments~\cite{RV16}. However, for fixed $t > 2$, we do not know better lower bounds, and the best known constructions have co-dimension $\approx t \sqrt{n}$~\cite{FVY15}. There are better constructions known for some values of $t=n^{\Theta(1)}$~\cite{FGW17,LiW21}. A lower bound on the co-dimension of $c(t) \sqrt{n}$ for some function $c(t)$ that grows with $t$ seems likely, but has been elusive despite various attempts, and so far for any fixed $t$, the bound for $t=2$ is the best known.

This work was motivated in part by these major gaps in our knowledge concerning $q$-LCCs and $t$-DRGPs. Our investigation follows a new perspective based on \emph{visible rank} (to be defined soon), which is a combinatorial proxy for (linear-algebraic) rank that we believe is of broader interest. This is similar in spirit to a thought-provoking recent work~\cite{DGGW20} that introduced a combinatorial abstraction of spanning structures called \emph{spanoids}\footnote{We defer a precise description of spanoids, along with their strong connection to visible rank, to Section~\ref{subsec:spanoids}.}  to shed light on the limitations of current techniques to prove better upper bounds on the dimension of $q$-LCCs.  They noted that current techniques to bound LCC dimension apply more generally to the associated spanoids, which they showed could have rank as large as $\widetilde{\Omega}(n^{(q-2)/(q-1)})$. Therefore to improve the LCC bound one needs techniques that are more specific than spanoids and better tailored to the LCC setting. One such possibility mentioned in \cite{DGGW20} is to restrict attention to \emph{symmetric spanoids}, which have a natural symmetry property that linear LCCs imply. 
 
Our visible rank notion turns out to be intimately related to symmetric spanoids via a rank-nullity type theorem (Theorem~\ref{thm:rank-nullity}). While technically simple in hindsight, it offers a powerful viewpoint on symmetric spanoids which in particular resolves a question posed in \cite{DGGW20}---we show that symmetric spanoids are also too coarse a technique to beat the $\widetilde{O}(n^{(q-2)/(q-1)})$ upper bound on $q$-LCC dimension.

\subsection{Stencils and visible rank}
With the above backdrop, we now proceed to describe the setup we use to study these questions, based on the rank of certain matrix templates which we call "stencils." We can represent the support structure of the check equations (i.e., dual codewords) governing a locality property by an $n$-column matrix of $0$'s and $\star$'s. For each check equation involving the $i$'th symbol and a repair group $R_i \subset [n]\setminus \{i\}$, we place a row in the stencil with $\star$'s precisely at $R_i \cup \{i\}$ (i.e., with $\star$'s at the support of the associated dual codeword). For the $\ell$-LRC property for instance, an associated stencil would be an $n \times n$ matrix with $\star$'s on the diagonal and $\ell$ other $\star$'s in each row. For $q$-LCC, we would have a $\delta n^2 \times n$ matrix whose rows are split into $n$ groups with the rows in the $i$'th group having a $\star$ in the $i$'th column and $q$ other $\star$'s in disjoint columns.

The smallest co-dimension of linear codes over a field $\F$ with certain locality property is, by design, the minimum rank $\rk{\F}{H}$ of the associated stencil $H$ when the $\star$'s are replaced by arbitrary nonzero entries from $\F$. In this work, our goal is to understand this quantity via field oblivious methods based only on the combinatorial structure of the stencil of $\star$'s.

The tool we put forth for this purpose is the \emph{visible rank} of $H$, denoted $\vrk{H}$, and defined to be the largest $r$ for which there is a $r \times r$ submatrix of $H$ that has exactly one general diagonal whose entries are all $\star$'s. By the Leibniz formula, the determinant of such a submatrix is nonzero for any substitution of nonzero entries for the $\star$'s. Thus $\rk{\F}{H} \ge \vrk{H}$ for every field $\F$.

Our goal in this work is to understand the interrelationship between visible rank and the co-dimension of linear codes under various locality requirements. This can shed further light on the bottleneck in known techniques to study trade-offs between locality and code dimension, and optimistically could also lead to better constructions.

\subsection{Visible rank and Locality}

For $\ell$-LRCs, a simple greedy argument shows that its associated parity-check stencil $H$ satisfies $\vrk{H} \ge n/(\ell+1)$. Thus visible rank captures the optimal trade-off between code dimension and locality $\ell$.

For $q$-LCCs with $q \ge 3$, an argument similar to (in fact a bit simpler than and implied by) the one for spanoids in \cite{DGGW20} shows that the stencil corresponding to $q$-LCCs has visible rank at least $n - \widetilde{O}(n^{(q-2)/(q-1)})$, showing an upper bound of $\widetilde{O}(n^{(q-2)/(q-1)})$ on the dimension of $q$-LCCs. We show that visible rank suffers the same bottleneck as spanoids in terms of bounding the dimension of $q$-LCCs. 
\begin{theorem}
\label{thm:q-lcc-intro}
For $q \ge 3$, there exist $n$-column stencils $H$ with $\star$'s structure compatible with $q$-LCCs for which $\vrk{H} \le n - \widetilde{\Omega}(n^{(q-2)/(q-1)})$. 
\end{theorem}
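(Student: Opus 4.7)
The plan is to invoke the rank--nullity theorem (Theorem~\ref{thm:rank-nullity}) to convert an upper bound on $\vrk{H}$ into a lower bound on the rank of the symmetric spanoid associated to $H$: since those two quantities are complementary (their sum is the number of columns), it suffices to construct a $q$-LCC-compatible stencil $H$ on $[n]$ whose symmetric spanoid has rank $\widetilde{\Omega}(n^{(q-2)/(q-1)})$. Crucially, any $q$-LCC stencil yields a symmetric spanoid for free: each row has $q+1$ stars and lets any one of those positions be derived from the other $q$, so rules come in symmetric $(q+1)$-cliques.

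For the construction, I would adapt the spanoid construction of~\cite{DGGW20} to the symmetric, $q$-LCC setting. A convenient vehicle is a $(q+1)$-uniform hypergraph $\mathcal{H}$ on $[n]$ built probabilistically (each $(q+1)$-subset included with probability $p = \Theta(n^{1-q})$) or pseudo-randomly from a combinatorial design with similar parameters. Two features are needed. First, every vertex must lie in $\Omega(n)$ pairwise-disjoint hyperedges of $\mathcal{H}$ so that the associated stencil has the $q$-LCC $\star$-pattern; a standard greedy argument extracts these disjoint hyperedges given that each vertex degree is $\Omega(n)$. Second, the symmetric spanoid rank of the resulting stencil must be $\widetilde{\Omega}(n^{(q-2)/(q-1)})$, which amounts to exhibiting a large set $I \subset [n]$ stable under the rule ``if $q$ vertices of a hyperedge have been derived, add the remaining one,'' iterated starting from $[n] \setminus I$. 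I would select $I$ randomly of the target size and apply the alteration method: bound the expected number of ``bad'' derivation trees rooted at elements of $I$ whose leaves lie outside $I$ using the edge-density of $\mathcal{H}$, choose parameters so this expectation is $o(|I|)$, then delete one vertex per bad tree. Combined with the rank--nullity identity this gives $\vrk{H} \le n - \widetilde{\Omega}(n^{(q-2)/(q-1)})$.

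The main obstacle is controlling the iterated symmetric closure. In the non-symmetric setting of~\cite{DGGW20}, each rule has a fixed consequent, which limits how derivation trees branch; in the symmetric setting a single hyperedge spawns $q+1$ rules, so derivations branch more aggressively and a naive union bound would overshoot the target $n^{(q-2)/(q-1)}$. I expect to handle this by encoding each derivation tree up to a canonical orientation---for example, orienting each hyperedge by its last-derived vertex---thereby reducing the symmetric count back to the asymmetric tally so that the DGGW analysis can be ported over up to polylogarithmic losses. Verifying that this canonicalization still lower-bounds the symmetric closure correctly is the technical heart of the argument.
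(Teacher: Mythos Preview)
Your reduction via rank--nullity is sound: $\vrk{H} \le n - r$ is equivalent to $\mathsf{rank}(\mathcal{S}) \ge r$ for the associated symmetric spanoid $\mathcal{S}$. But the next step is where the argument breaks. Lower-bounding $\mathsf{rank}(\mathcal{S})$ by $r$ requires showing that \emph{every} subset $T \subseteq [n]$ of size less than $r$ fails to span $[n]$. Your plan instead produces a \emph{single} set $I$ (of size roughly $r$) for which $[n]\setminus I$ is closed under the inference rules. That only certifies that the particular set $[n]\setminus I$---of size $n - |I|$, far larger than $r$---is not spanning; it says nothing about whether some other, smaller, set might span. A concrete obstruction: on $[5]$ take the symmetric spanoid generated by the sets $\{1,2\}$ and $\{1,3,4,5\}$. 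Then $I = \{1,2,4,5\}$ has $[5]\setminus I = \{3\}$ closed, yet the spanoid rank is only $3$ (for instance $\{3,4,5\}$ spans), not $\ge |I| = 4$. So even if your alteration argument succeeds in manufacturing such an $I$, it does not deliver the rank lower bound. A correct spanoid-side argument would have to run a union bound over \emph{all} candidate spanning sets $T$ of size below $r$, which is a different and more delicate calculation than the one you outline; the ``canonicalization'' issue you flag for derivation trees is downstream of this and does not arise until the quantifier is fixed.

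For comparison, the paper bypasses spanoids entirely for this direction. It uses the structural fact (Lemma~\ref{lemma:upper-tri}) that any visibly full-rank square sub-stencil is permutation-equivalent to an upper-triangular stencil, and hence contains a large all-zero rectangle; contrapositively, if $H$ has no $a \times b$ all-zero sub-stencil then $\vrk{H} < a+b$ (Lemma~\ref{lemma:biclique-bound}). The construction is a random $q$-LCC stencil: a first-moment bound over choices of $s$ columns and $s-k$ rows (with $s = n^{(q-2)/(q-1)}$ and $k = s/\log n$) shows that with high probability no $(s-k)\times(n-s)$ all-zero rectangle exists, whence $\vrk{H} < (s-k)+(n-s) = n - k$. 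The combinatorial handle is thus ``no large zero rectangle'' rather than ``no small spanning set,'' and the former is a single union bound over rectangles rather than one quantified over all small $T$.
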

Through the precise connection we establish between between visible rank and symmetric spanoids, this shows the same limitation for symmetric spanoids, thus answering a question posed in \cite{DGGW20}.

For the $t$-DRGP property, we focus on the $t=2$ case, with the goal of finding a combinatorial substitute for the currently known $\Omega(\sqrt{n})$ lower bounds on co-dimension~\cite{Woo16,FGW17} which are algebraic.
Unfortunately, we show that visible rank, in its basic form, is too weak in this context.
\begin{theorem}
\label{thm:2-DRGP-limits-intro}
There exist $2n\times n$ stencils $H$ with $\star$'s structure compatible with $2$-DRGP for which $\vrk{H} \le O(\log n)$. 
\end{theorem}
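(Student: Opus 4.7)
The plan is to establish the theorem via a probabilistic construction. For each position $i \in [n]$, independently sample a uniformly random subset $A_i \subseteq [n] \setminus \{i\}$ (each $j \ne i$ is placed in $A_i$ independently with probability $1/2$), set $B_i = [n] \setminus (\{i\} \cup A_i)$, and declare the two repair groups of $i$ to be $R_1(i) = \{i\} \cup A_i$ and $R_2(i) = \{i\} \cup B_i$. Since $R_1(i) \cap R_2(i) = \{i\}$ by construction, the resulting $2n \times n$ stencil $H$ is $2$-DRGP-compatible for every outcome of the random choices; it therefore suffices to show that $\vrk{H} = O(\log n)$ holds with positive probability.

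A convenient combinatorial reformulation of visible rank is the following: a bipartite graph has a unique perfect matching if and only if it can be reduced to the empty graph by iteratively peeling off a degree-$1$ vertex together with its unique neighbor. Equivalently, an $r \times r$ $\star/0$ matrix has a unique $\star$-diagonal if and only if its rows and columns can be reordered so that the resulting matrix is lower triangular with $\star$'s on the diagonal and $0$'s strictly above it (entries strictly below the diagonal being unconstrained). Thus $\vrk{H}$ equals the largest $r$ for which $H$ contains such a triangular pattern: an ordered choice of $r$ rows and $r$ columns from $H$ whose $\binom{r}{2}$ strictly above-diagonal entries are all $0$ and whose diagonal entries are all $\star$.

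The heart of the proof is a union bound over triangular patterns. Under the random construction, for any row $(i, t)$ and column $j \ne i$, the entry $H_{(i,t), j}$ is $\star$ with probability exactly $1/2$; within a row the entries at columns $j \ne i$ are i.i.d.\ Bernoulli$(1/2)$, and rows indexed by distinct positions $i$ are fully independent. Hence for any fixed ordered choice of $r$ rows (with distinct position indices) and $r$ ordered columns, the probability that all $\binom{r}{2}$ strictly above-diagonal entries are $0$ is at most $2^{-r(r-1)/2}$. Since the number of such ordered row/column choices is at most $(2n)^r \cdot n^r$, the expected number of triangular patterns of size $r$ is at most $(2n)^r \cdot n^r \cdot 2^{-r(r-1)/2}$, which becomes $o(1)$ once $r \ge C \log n$ for a sufficiently large absolute constant $C$.

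The main technical wrinkle, which I expect to be the only delicate point, is accounting for ordered row choices that include both $R_1(i)$ and $R_2(i)$ for some position $i$; these rows are not independent but rather exactly anti-correlated at every column $j \ne i$, satisfying $H_{(i,1), j} + H_{(i,2), j} = 1$. If the two paired rows occupy ordering positions $p < q$ with $q < r$, then the triangular constraint demands $H_{(i,1), j_l} = H_{(i,2), j_l} = 0$ for every $l > q$, contradicting the anti-correlation. Hence paired rows can coexist in a triangular pattern only when the later of the two is the very last row in the ordering; this heavy restriction cuts down the number of relevant ordered choices and contributes only a lower-order term to the union bound. With this bookkeeping, a positive-probability choice of $\{A_i\}_{i \in [n]}$ yields a $2$-DRGP stencil $H$ with $\vrk{H} \le O(\log n)$, establishing Theorem~\ref{thm:2-DRGP-limits-intro}.
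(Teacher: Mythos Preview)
Your proposal is correct and follows essentially the same approach as the paper: the random construction (partitioning $[n]\setminus\{i\}$ into $A_i,B_i$ uniformly) is exactly the paper's random $t$-DRGP stencil specialized to $t=2$, and the core argument (use Lemma~\ref{lemma:upper-tri} to reduce to triangular patterns, then union bound) is the same. The only difference is bookkeeping for the dependency between rows $(i,1)$ and $(i,2)$: the paper handles it uniformly by passing to a subset of $\binom{k}{2}/t$ independent zero-events, losing a factor $t$ in the exponent, whereas you case-split on whether a pair appears and use the anti-correlation to force the later paired row to be last. Both routes yield $\vrk{H}=O(\log n)$; yours gives a slightly sharper constant for $t=2$ but is specific to that case, while the paper's argument works verbatim for general $t$.
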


\subsection{Visible rank and tensor powers}

In view of Theorem~\ref{thm:2-DRGP-limits-intro}, we investigate avenues to get better bounds out of the visible rank approach. Specifically, we study the visible rank of tensor powers of the matrix. It turns out that the visible rank is super-multiplicative: $\vrk{H \otimes H} \ge \vrk{H}^2$, while on the other hand algebraic rank is sub-multiplicative, so higher tensor powers could yield better lower bounds on the rank. Indeed, we are able to show precisely this for $2$-DRGP:

\begin{theorem}
\label{thm:2-DRGP-tensor-intro}
For every $2n\times n$ stencil $H$ with $\star$'s structure compatible with $2$-DRGP, we have 
$\vrk{H \otimes H} \ge \Omega(n)$, and thus $\rk{\F}{H} \ge \Omega(\sqrt{n})$ for every field $\F$. 
\end{theorem}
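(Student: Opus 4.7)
The plan is to exhibit an $n \times n$ submatrix of $H \otimes H$ with a unique $\star$-diagonal, which shows $\vrk{H \otimes H} \ge n$. Since matrix rank is perfectly multiplicative under tensor products over any field, this yields $\rk{\F}{H}^2 = \rk{\F}{H \otimes H} \ge \vrk{H \otimes H} \ge n$ and hence $\rk{\F}{H} \ge \sqrt{n}$ for every $\F$. Note that the super-multiplicativity bound $\vrk{H \otimes H} \ge \vrk{H}^2$ alone is useless here, because by Theorem~\ref{thm:2-DRGP-limits-intro} it can be as small as $O(\log^2 n)$; so one must work with $H \otimes H$ combinatorially and use the 2-DRGP structure directly.

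For the construction, for each column $i \in [n]$ I would use the 2-DRGP hypothesis to fix two rows $r_{i,1}, r_{i,2}$ of $H$ whose supports are $\{i\} \cup R_1(i)$ and $\{i\} \cup R_2(i)$, with $R_1(i) \cap R_2(i) = \emptyset$. Then I would take the submatrix $M$ of $H \otimes H$ indexed by the $n$ row-pairs $(r_{i,1}, r_{i,2})$ and the $n$ column-pairs $(i, i)$, for $i \in [n]$. The key computation is that the entry of $M$ at row $(r_{i,1}, r_{i,2})$ and column $(j, j)$ is a $\star$ iff $H[r_{i,1}, j]$ and $H[r_{i,2}, j]$ are both $\star$, i.e., iff $j \in (\{i\} \cup R_1(i)) \cap (\{i\} \cup R_2(i))$; since the two repair groups of $i$ are disjoint and exclude $i$, this intersection is just $\{i\}$. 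Hence $M$ is literally a diagonal matrix with $\star$'s on the diagonal and $0$'s elsewhere, and so trivially has a unique general $\star$-diagonal. Separately, one must verify that the chosen row-pairs are distinct as ordered pairs of rows of $H$: if $(r_{i,1}, r_{i,2}) = (r_{j,1}, r_{j,2})$ with $i \ne j$, then $r_{i,k}$ has a $\star$ at column $j$ for both $k = 1, 2$, forcing $j \in R_1(i) \cap R_2(i) = \emptyset$, a contradiction.

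There is no significant technical obstacle once the right submatrix is identified; the whole content of the argument lies in choosing the column-pair $(i, i)$ together with the row-pair $(r_{i,1}, r_{i,2})$. The conceptual point is that passing to the tensor square lets us \emph{intersect} the supports of the two disjoint repair groups associated with each column, collapsing a potentially large row-support down to the single element $\{i\}$. That collapse manufactures a size-$n$ diagonal substructure inside $H \otimes H$, which is exactly the form of hidden diagonal that a single copy of $H$ (with its potentially logarithmic visible rank) cannot reveal on its own.
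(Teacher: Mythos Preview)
Your proposal is correct and mirrors the paper's proof exactly: the paper also selects the $n \times n$ substencil of $H \otimes H$ with columns $(i,i)$ and rows $((i,1),(i,2))$ for $i\in[n]$ and observes it is a diagonal of $\star$'s by the disjointness of the two repair groups. One tiny remark: for stencils the $\F$-rank is only \emph{sub}-multiplicative under tensoring (the paper's Corollary~\ref{cor:ark-sub-mult}), not necessarily equal, but the inequality $\rk{\F}{H}^2 \ge \rk{\F}{H \otimes H}$ is precisely the direction you need, so the argument stands.
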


On the other hand, for $q$-LCCs with $q \ge 3$, we show that higher tensor powers suffer the same bottleneck as Theorem~\ref{thm:q-lcc-intro}.
\begin{theorem}
\label{thm:q-lcc-tensor-intro}
For $q \ge 3$, there exist $n$-column stencils $H$ with $\star$'s structure compatible with $q$-LCCs for which $\vrk{H^{\otimes k}}^{1/k} \le n - \widetilde{\Omega}(n^{(q-2)/(q-1)})/k$ for any integer $k$. In particular even for $k=n^{o(1)}$, we get no polynomial improvements to the current upper bounds on dimension of $q$-LCCs. 
\end{theorem}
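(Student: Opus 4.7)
The plan is to combine the construction from Theorem~\ref{thm:q-lcc-intro} (which gives an LCC-compatible $n$-column stencil $H$ with $\vrk{H} \leq n - t$, where $t = \widetilde{\Omega}(n^{(q-2)/(q-1)})$) with the following multiplicative upper bound on the visible rank of tensor powers: for every $n$-column stencil $H$ and every integer $k \geq 1$,
\[
\vrk{H^{\otimes k}} \leq \vrk{H} \cdot n^{k-1}.
\]
Granted this lemma, taking $k$-th roots and applying the elementary inequality $(1-x)^{1/k} \leq 1 - x/k$ (valid for $x \in [0,1]$ and $k \geq 1$) yields
\[
\vrk{H^{\otimes k}}^{1/k} \leq \bigl((n-t)\,n^{k-1}\bigr)^{1/k} = n\,(1 - t/n)^{1/k} \leq n - t/k,
\]
which is the claimed bound; the ``no polynomial improvement'' consequence for $k = n^{o(1)}$ is then immediate since $t/k = \widetilde{\Omega}(n^{(q-2)/(q-1)-o(1)})$.

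To prove the lemma, I use a fiber decomposition over the last $k-1$ coordinates. Let $M$ be an $r \times r$ submatrix of $H^{\otimes k}$ with rows $R \subseteq [m]^k$, columns $C \subseteq [n]^k$, and unique $\star$-diagonal $\sigma : R \to C$. For each tail $\mathbf{c}' = (c_2,\ldots,c_k) \in [n]^{k-1}$, set
\[
C_{\mathbf{c}'} = \{c_1 \in [n] : (c_1, \mathbf{c}') \in C\}, \qquad R_{\mathbf{c}'} = \sigma^{-1}\bigl(C_{\mathbf{c}'} \times \{\mathbf{c}'\}\bigr).
\]
Then $|C| = \sum_{\mathbf{c}'} |C_{\mathbf{c}'}|$ has at most $n^{k-1}$ nonzero terms, so it suffices to show $|C_{\mathbf{c}'}| \leq \vrk{H}$ for every fixed $\mathbf{c}'$.

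For each $\mathbf{c}'$, I extract from $R_{\mathbf{c}'}$ an $|C_{\mathbf{c}'}| \times |C_{\mathbf{c}'}|$ unique-$\star$-diagonal submatrix of $H$, which bounds $|C_{\mathbf{c}'}| \leq \vrk{H}$ by the definition of visible rank. First, the first-coordinate projection $\pi_1 : R_{\mathbf{c}'} \to [m]$ is injective: otherwise distinct $r, r' \in R_{\mathbf{c}'}$ share $r_1 = r_1'$, with $\sigma$-images $(c_1, \mathbf{c}')$ and $(c_1', \mathbf{c}')$ differing only in the first coordinate, and swapping these images yields $\sigma' \neq \sigma$ that remains a $\star$-diagonal of $M$ (the $l \geq 2$ coordinate entries are unchanged from $\sigma$, while $H[r_1, c_1'] = \star$ and $H[r_1', c_1] = \star$ because $r_1 = r_1'$ and both columns already appear as first-coordinate $\sigma$-images of that shared row), contradicting uniqueness of $\sigma$. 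Setting $R_{\mathbf{c}'}^{(1)} = \pi_1(R_{\mathbf{c}'})$, the bijection $\tau : R_{\mathbf{c}'}^{(1)} \to C_{\mathbf{c}'}$ induced by $\sigma$ is a $\star$-diagonal in $H$, and a symmetric swap argument shows $\tau$ is unique: any alternative $\star$-diagonal $\tau' \neq \tau$ of the submatrix $R_{\mathbf{c}'}^{(1)} \times C_{\mathbf{c}'}$ lifts to $\sigma'(r) := (\tau'(\pi_1(r)), \mathbf{c}')$ for $r \in R_{\mathbf{c}'}$ (leaving $\sigma$ elsewhere), producing another $\star$-diagonal of $M$ distinct from $\sigma$ and again contradicting uniqueness. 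Thus $|C_{\mathbf{c}'}| \leq \vrk{H}$, and summing over fibers completes the proof. The only subtlety throughout is verifying that each modified $\sigma$ remains a valid $\star$-diagonal of $M$; this reduces to the factorization $H^{\otimes k}[r,c] = \prod_l H[r_l, c_l]$ together with the observation that all non-first-coordinate factors already carry $\star$ from the original diagonal.
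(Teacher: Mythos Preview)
Your proposal is correct and follows essentially the same route as the paper: combine the construction of Theorem~\ref{thm:q-lcc-intro} with the bound $\vrk{H^{\otimes k}} \le n^{k-1}\vrk{H}$ and the inequality $(1-x)^{1/k} \le 1 - x/k$, exactly as in Proposition~\ref{prop:high-vrk-tp} and Corollary~\ref{thm:lcc-tensor-lb}. The only difference is in how you establish the tensor bound: the paper (Proposition~\ref{prop:vrk-tensor-ub}) invokes the upper-triangular characterization of visibly full rank stencils and pigeonholes on a single largest fiber, whereas your swap argument shows directly that \emph{every} fiber $C_{\mathbf{c}'}$ has size at most $\vrk{H}$ without appealing to Lemma~\ref{lemma:upper-tri}; both arguments are valid and yield the same inequality.
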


\subsection{Visible capacity}

Given the super-multiplicativity of visible rank under tensor powers, and drawing inspiration from the Shannon capacity of graphs, we put forth the notion of \emph{visual capacity} of a matrix $H$ of $0$'s and $\star$'s, defined as
\[ \Upsilon(H) := \sup_{k} \vrk{H^{\otimes k}}^{1/k} \ .\]
The visual capacity is also a field oblivious lower bound on algebraic rank $\rk{\F}{H}$ for any field $\F$. It is not known whether there are stencils that exhibit a gap between visible capacity and its minimum possible $\rk{\F}{H}$  over all fields $\F$. 

\smallskip
In the spirit of visual capacity, we extend Theorems~\ref{thm:2-DRGP-limits-intro} and~\ref{thm:2-DRGP-tensor-intro} to show that for any constant $t \in \N$, one can attain an exponentially better lower bound on the algebraic rank of a stencil $H$ by examining the visible rank of $H^{\otimes t}$ rather than examining the visible rank of $H^{\otimes k}$ for any $k < t$. In other words:

\begin{theorem}
\label{thm:tensor-gaps}
For any fixed natural number $t \ge 2$ and sufficiently large $n$, there exists a $tn \times n$ stencil $H$ such that $\vrk{H^{\otimes k}} = O_t((\log{n})^k)$ for any $k = 1, \ldots , t-1$ and $\vrk{H^{\otimes t}} = \Omega(n)$.
\end{theorem}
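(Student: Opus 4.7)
The plan is to generalize the stencil construction and analysis behind Theorems~\ref{thm:2-DRGP-limits-intro} and \ref{thm:2-DRGP-tensor-intro} to a symmetric $t$-dimensional setting. Take an explicit construction that attaches to each column $i \in [n]$ exactly $t$ check rows $r^{(1)}_i, \ldots, r^{(t)}_i$ with two key properties: (a) the rows $r^{(1)}_i, \ldots, r^{(t)}_i$ are pairwise disjoint outside $\{i\}$ (so $H$ is $t$-DRGP-compatible), and (b) the sub-stencil formed by the rows of any single ``direction'' $j$ already has visible rank $O(\log n)$, obtained by lifting the $t = 2$ construction to a $t$-wise symmetric combinatorial structure (concretely, via $t$ pairwise-independent partitions of an appropriately chosen incidence structure so that each direction individually carries the ``few-overlap'' pattern that drives Theorem~\ref{thm:2-DRGP-limits-intro}).

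For the lower bound $\vrk{H^{\otimes t}} \ge \Omega(n)$, I would choose the $n$ row-tuples $\bigl(r^{(1)}_i, r^{(2)}_i, \ldots, r^{(t)}_i\bigr)_{i \in [n]}$ and pair them with the ``diagonal'' column-tuples $(i, i, \ldots, i)_{i \in [n]}$, following the template of Theorem~\ref{thm:2-DRGP-tensor-intro}. By property (a), the entry of $H^{\otimes t}$ at position indexed by $\bigl((r^{(1)}_i, \ldots, r^{(t)}_i), (j, \ldots, j)\bigr)$ equals $\prod_{k=1}^{t} \mathbf{1}[j \in r^{(k)}_i] = \mathbf{1}\bigl[j \in \bigcap_k r^{(k)}_i\bigr] = \mathbf{1}[j = i]$. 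Thus the chosen $n \times n$ submatrix is (a permutation of) the identity and hence has a unique $\star$-diagonal.

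For the upper bound $\vrk{H^{\otimes k}} = O_t((\log n)^k)$ when $k < t$, I would generalize the swap/obstruction argument that proves $\vrk{H} = O(\log n)$ in Theorem~\ref{thm:2-DRGP-limits-intro}. The heart of the matter is to show that any $r \times r$ submatrix of $H^{\otimes k}$ with a unique $\star$-diagonal must, in each of its $k$ tensor coordinates, induce a unique-diagonal structure inside $H$ of comparable size; combined with $\vrk{H} = O(\log n)$, this yields the $(\log n)^k$ bound. The argument exploits the fact that when $k < t$, for any selection of rows in $H^{\otimes k}$ there is always an ``unengaged'' direction in each coordinate slot giving rise to a rich pool of overlapping rows. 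A pigeonhole/exchange argument then produces two row-tuples differing in a single coordinate whose rows in that coordinate can be swapped, violating uniqueness unless $r$ is already $O_t((\log n)^k)$.

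The main obstacle is the upper bound. Visible rank is super-multiplicative and, in general, not controlled from above by $\vrk{H}^k$ under tensoring, so the bound $\vrk{H^{\otimes k}} \le O_t((\log n)^k)$ cannot be inferred abstractly and must be extracted from the specific ``$t$-direction'' structure of $H$. The delicate point is to show that any unique-diagonal certificate at level $k < t$ can be decomposed coordinate-wise into unique-diagonal certificates of $H$ itself, so that the $O(\log n)$ bound of Theorem~\ref{thm:2-DRGP-limits-intro} can be invoked $k$ times; simultaneously, the construction must be tight enough that the decomposition genuinely fails at level $t$ (to make room for the $\Omega(n)$ jump). Carrying out this decomposition while preserving the sharp dichotomy at $k = t$ is where the bulk of the technical work lies.
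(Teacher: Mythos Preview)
Your proposal has a genuine and fatal gap stemming from property (a). You require the $t$ check rows $r^{(1)}_i,\ldots,r^{(t)}_i$ to be \emph{pairwise} disjoint outside $\{i\}$ (the $t$-DRGP condition). But pairwise disjointness already forces $r^{(j_1)}_i \cap r^{(j_2)}_i = \{i\}$ for every pair $j_1 \neq j_2$, so the very same diagonal argument you use for the lower bound at level $t$ applies verbatim at level $2$: the $n\times n$ sub-stencil of $H^{\otimes 2}$ with rows $\bigl(r^{(1)}_i,r^{(2)}_i\bigr)_{i\in[n]}$ and columns $(i,i)_{i\in[n]}$ is a permutation of the identity, giving $\vrk{H^{\otimes 2}} \ge n$. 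This directly contradicts your target upper bound $\vrk{H^{\otimes k}} = O_t((\log n)^k)$ for all $2 \le k \le t-1$ whenever $t \ge 3$. In other words, any $t$-DRGP-compatible stencil has the ``jump'' already at the second tensor power, not at the $t$'th, so no choice of the per-direction structure in (b) can rescue the upper bound.

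The paper's construction avoids exactly this trap by taking the \emph{complementary} random model: for $i\neq i'$ it sets \emph{one} of the $t$ entries $H[(i,1),i'],\ldots,H[(i,t),i']$ to $0$ and the remaining $t-1$ to $\star$ (so $H$ is very dense and decidedly not $t$-DRGP). Then the product of any $k<t$ of these entries is $\star$ with probability $1-k/t>0$, so the diagonal sub-stencil at level $k<t$ is full of off-diagonal $\star$'s and is useless as a witness; only at level $k=t$ does the product vanish for every $i\neq i'$, producing the identity sub-stencil and the $\Omega(n)$ lower bound. For the upper bound the paper does \emph{not} attempt a coordinate-wise decomposition into visibly full rank sub-stencils of $H$ (your sketch of this step is where the argument is vaguest, and indeed such a decomposition need not exist in general). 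Instead it introduces the auxiliary notion of \emph{distinct rank} $\mathrm{drk}(H^{\otimes k})$, proves the recursive inequality $\vrk{H^{\otimes k}} \le 2k^2\,\vrk{H^{\otimes(k-1)}}\,\mathrm{drk}(H^{\otimes k})$, and then shows probabilistically that $\mathrm{drk}(H^{\otimes k}) = O_t(\log n)$ for each $k<t$; unrolling the recursion gives $O_t((\log n)^k)$.
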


The proofs of our results are technically simple, once the framework is set up. Our contributions are more on the conceptual side, via the introduction and initial systematic study of visible rank and its diverse connections. Our inquiry also raises interesting questions and directions for future work, some of which are outlined in Section~\ref{further}, including the relationship between visible capacity and algebraic rank.

\subsection{Connections and related work}
Studying the interplay between the combinatorial structure of a matrix and its rank is a natural quest that arises in several contexts. See Chapter 3 of~\cite{Tre16} for a survey of works on lower bounding the algebraic rank. For works specific to codes with locality, the work of \cite{BDYW11} analyzed the combinatorial properties of design matrices over the reals to improve bounds on LCCs over the real numbers, although the methods used are particular to the field of reals and do not carry over to any field.

Visible rank in particular turns out to have a diverse array of connections, some of which we briefly discuss here. The connection to spanoids, that we already mentioned, is described in more detail in Section~\ref{subsec:spanoids}.

\smallskip\noindent \textbf{Uniquely restricted matchings.} 
Given a stencil $H \in \{0,\star\}^{m \times n}$, there is a \emph{canonical bipartite graph} $G$ between the rows and columns of $H$, where a row connects to a column if and only if their shared entry has a $\star$. Visual rank has a nice graph-theoretic formulation:  it turns out (see Section~\ref{subsec:vrk-combinatorics}) that a submatrix of $H$ has a unique general diagonal of $\star$'s iff the corresponding induced subgraph of $G$ has a unique perfect matching. Such induced bipartite graphs are known in the literature as \emph{Uniquely Restricted Matchings (URMs)} and have been extensively studied~\cite{GHL01, GKT01, HMT06, Mis11, TD13, FJJ18}. They were first introduced in \cite{GHL01}, wherein they proved that computing the maximum URM of a bipartite graph is \textsc{NP}-complete. It was later shown in \cite{Mis11} that $n^{1/3 - o(1)}$ approximations of the maximum URM is also \textsf{NP}-hard unless $\textsf{NP} = \textsf{ZPP}$ and additionally that the problem of finding the maximum URM is $\textsf{APX}$-complete.

\smallskip\noindent \textbf{Matroids.} 
One can encode any matroid into a stencil. Recall that a circuit of a matroid is a minimal dependent set---that is, a dependent set whose proper subsets are all independent (the terminology reflects the fact that in a graphic matroid, the circuits are cycles of the graph). Given a matroid $\mathcal{M}$ on universe $[n]$ and a set $\mathcal{C} = \{C_1,\ldots,C_m\}$ of circuits of $\mathcal{M}$, we consider a $m \times n$ stencil $H$ where the entry at $(i,j)$ is a $\star$ if and only if $j \in C_i$. For this matrix, one can show that a collection of visibly independent columns (see Section~\ref{subsec:vrk-intro} for the definition of visible independence) is an independent set in the dual matroid. Therefore, we have $\rk{}{\mathcal{M}} + \vrk{H} \le n$---this also follows from our rank-nullity theorem for symmetric spanoids as one can associate a symmetric spanoid with any matroid (the collection of sets in Definition~\ref{def:symm-spanoid} will just be the circuits of the matroid).

\smallskip\noindent \textbf{Min-rank.} 
The minimum possible rank of a square $0$-$\star$ stencil over assignments to the $\star$'s from some field has been well studied in combinatorics.\footnote{There is a slight difference in the minrank setup, in that the $\star$'s can take any value including $0$, except the $\star$'s on the diagonal which must take nonzero values.} For example, we have Haemers' classic bound on independent set of a graph and its applications to Shannon capacity~\cite{Haemers79}. Note that in this case we are using a linear-algebraic tool to understand a combinatorial quantity, whereas visible rank goes the other way, serving as a combinatorial proxy for a linear-algebraic quantity. Recent interest in minrank has included their characterization of the most efficient linear index codes~\cite{BBJK11}. The minrank of stencils corresponding to $n$-vertex random Erd\"{o}s-R\'{e}nyi graphs was recently shown to be $\Theta(n/\log n)$ over any field that is polynomially bounded~\cite{GRW18}.

\smallskip\noindent \textbf{Matrix Rigidity.}
Given a square matrix $A \in \F^{n \times n}$ and a natural number $r \le n$, the \emph{rigidity} of $A$ is the minimal number of entries that one can perturb in $A$ so that it rank becomes at most $r$. Matrix rigidity was introduced in the seminal work~\cite{Val77} and since then had expansive research on constructing explicit rigid matrices. See~\cite{Ram20} for a recent survey on matrix rigidity and related connections. The visible rank provides a combinatorial guarantee on the rank of a matrix, and that conjures up the possibility of constructing explicit rigid matrices by finding explicit stencils whose visible rank is robust to small amounts of corruptions of its entries. We pose this approach as Question~\ref{further:robust-vrk} in Section~\ref{further}.

\smallskip\noindent \textbf{Incidence Theorems.} 
Given an $m \times n$ matrix $A$ over the field $\F$ with rank $r$, one can decompose $A = MN$ where $M$ and $N$ are $m \times r$ and $r \times n$ matrices. If we consider the rows of $M$ as hyperplanes over the projective plane $\mathbb{P}\F^{r-1}$ of dimension $(r-1)$ and the columns of $N$ as points in $\mathbb{P}\F^{r-1}$, then the stencil of $A$ defines a point-hyperplane incidence over $\mathbb{P}\F^{r-1}$. In particular, when $r = 3$, the stencil of $A$ defines a point-line incidence over the field $\F$. Thus studying the combinatorial properties of a stencil whose $\F$-rank (see Definition~\ref{def:alg-rank}) is at most $3$ is equivalent to studying the combinatorics of point-line incidences over the field $\F$. For more on incidence theorems, see~\cite{Dvi12} for an excellent survey in the area.

\smallskip\noindent \textbf{Communication complexity.} The visible rank provides a connection between deterministic and nondetereministic communication complexity~\cite{lovasz-survey}. For a communication problem $f : X \times Y \to \{0,1\}$, define the stencil $H_f \in \{0,\star\}^{X \times Y}$ by $H_f(x,y) = \star$ if $f(x,y)=0$ and $M_f(x,y) = 0$ if $f(x,y)=1$. Then it is known that $D(f) \le (\log_2 \vrk{H_f}) \cdot (N(f) + 1)$ where $D(f)$ and $N(f)$ are respectively the deterministic and nondeterministic communication complexity of $f$~\cite[Thm 3.5]{lovasz-survey}.

\smallskip\noindent \textbf{Distance of expander codes.} We can associate a stencil with the parity-check matrix $H$ of a code $C$ in the natural way, by placing $\star$'s at the positions of the nonzero field elements. Suppose that every subset of $d$ columns of the stencil is visibly independent. Then the minimum distance of $C$ is at least $d$. This is in fact the lower bound on distance of expander codes established in \cite{SS96}. The visible independence of subsets of $d$ columns is argued via \emph{unique expansion} of the canonical bipartite graph associated with $H$ (also called the factor graph in coding theory). Namely, for every subset $S$ of $d$ vertices on the left, there is a vertex on the right that's adjacent to exactly one vertex in $S$. Thus, linear independence is really argued via visible independence.

\subsection{Organization} 
 We begin in Section~\ref{stencil} by formally introducing the notations and terminology for stencils, and  establishing some simple but very useful combinatorial facts about visible rank. We use these to show that there are $q$-LCC stencils for $q \ge 3$ with visible rank at most $n - \widetilde{\Omega}(n^{(q-2)/(q-1)})$ (Section~\ref{lcc}), and the existence of a $2$-DRGP stencil with visible rank of at most $O(\log{n})$ (Section~\ref{drgp}). In Section~\ref{tensor}, we introduce a tensor product operation on stencils and prove various properties about them. In Section~\ref{locality-tp}, we utilize tensor powers to show that the rank of a $2$-DRGP over any field $\F$ is at least $\sqrt{n}$, which asymptotically matches the current best lower bounds on $t$-DRGP codes. We also show that for $q$-LCC stencils, the tensor powers at the $k$'th level for $k \le \mathrm{polylog}(n)$ do not yield better lower bounds on the rank than the ones obtained from the visible rank. Finally, in Section~\ref{further}, we discuss further directions and questions inspired by this work.

\section{Stencils and their visible rank}
\label{stencil}
In this section, we will be formally setting up the model of stencils and all the associated definitions and notations.

We denote $[n]$ to be the set $\{1, 2, \ldots , n\}$. For any matrix $H \in \{0, \star\}^{m \times n}$, we denote it as a \emph{stencil}. For an $m \times n$ stencil $H$, we denote its entry in the $i$'th row and $j$'th column by $H[i,j]$. Any restriction to the specific sub-collection of the rows and columns of $H$ is said to be a \emph{sub-stencil} of $H$. For given sets $A$ and $B$, a stencil $H$ is said to be an $A \times B$ if it is an $|A| \times |B|$ stencil along with an associated indexing of the rows by $A$ and the columns by $B$. Given a square stencil $M \in \{0, \star\}^{n \times n}$, a \emph{general diagonal} of $M$, is a collection of entries $\{M[1,\pi(1)], \ldots , M[n,\pi(n)]\}$ where $\pi$ is a permutation on $[n]$. We say that a general diagonal is a \emph{star diagonal} if all its $n$ entries are $\star$'s.

\subsection{Algebraic witnesses of stencils}
Instantiating a code with the locality properties stipulated by a stencil amounts to filling its $\star$'s with field entries, or realizing an algebraic witness as defined below.
\begin{defn}[Algebraic witness]
\label{def:alg-wit}
For field $\F$ and stencil $H \in \{0, \star\}^{m \times n}$, a matrix $W \in \F^{m \times n}$ is said to be an $\F$-witness of $H$ if it satisfies the property that $W[i,j] \neq 0$ if and only if $H[i,j] = \star$. More generally, any $\F$-witness of $H$ is said to be an algebraic witness of $H$.
\end{defn}

We stress that every $\star$ in the stencil $H$ must be replaced by a \emph{nonzero} entry from $\F$ and cannot be zero. Of the possible algebraic witnesses for $H$, we will be primarily focused in this paper on the algebraic witnesses that attain the smallest feasible rank, which leads us to the following definition.

\begin{defn}[Rank]
\label{def:alg-rank}
Given an $m \times n$ stencil $H$, the $\F$-rank of $H$ is the smallest natural number $r$ such that there exists an $\F$-witness $W \in \F^{m \times n}$ whose rank is equal to $r$. We denote the value $r$ by $\rk{\F}{H}$.
\end{defn}

\subsection{Visible Rank}
\label{subsec:vrk-intro}
In this section, we introduce our notion of the \emph{visible rank} of a stencil. The main motivation of introducing the visible is to be able to determine the most optimal lower bound on the rank of a matrix with only the knowledge of knowing the support of a matrix and nothing else about the values of that support. 

Consider a square matrix $A \in \F^{n \times n}$, and suppose we are interested in determining if it is full rank. A natural approach would be to inspect its determinant. From the Leibniz formula, we know that $\det(A) = \sum_{\pi \in S_n}{(-1)^{\text{sgn}(\pi)}\prod_{i=1}^n{A_{i, \pi(i)}}}$, where $S_n$ denotes the symmetric group of order $n$ and $\text{sgn}(\pi)$ denotes the sign of a permutation $\pi$. From the Leibniz formula, notice that $\det(A)$ is a linear combination of the nonzero general diagonals of $A$. If our hope is to obtain $\det(A) \neq 0$ without any knowledge of the values of the support of $A$, then one way to guarantee it is to say that $A$ has exactly one nonzero general diagonal. In such a case, we can guarantee that $\det(A) \neq 0$. As when $A$ has more than one general diagonal, there is no guarantee if $\det(A) \neq 0$ without inspecting the values of the support of $A$.

From the previous discussion, it seems natural to define the notion of a rank on stencils as follows.

\begin{defn}[Visibly Full Rank]
\label{def:vis-full-rank}
For a square stencil $M \in \{0, \star\}^{n \times n}$, we say that $M$ is visibly full rank if $M$ has exactly one $\star$ diagonal. That is, a general diagonal whose entries are all $\star$'s.
\end{defn}

Of course, in most cases, when we are given a matrix $A \in \F^{m \times n}$, we would be interested in determining its rank. One way to define the rank of the matrix $A$ is to say that $\text{rank}(A)$ is the size of the largest square submatrix in $A$ that is full-rank. From this viewpoint, it seems clear to define the rank of a stencil in a similar fashion.

\begin{defn}[Visible Rank]
\label{def:vrk}
For a stencil $H \in \{0, \star\}^{m \times n}$, the visible rank of $H$, denoted $\vrk{H}$, is the largest square sub-stencil in $H$ that is visibly full rank.
\end{defn}

We also say that a set of $k$ columns in $H$ is \emph{visibly independent} if there exists a $k \times k$ sub-stencil within these $k$ columns that is visibly full rank.

Of course, not all full-rank square matrices $A \in \F^{n \times n}$ necessarily have exactly one nonzero general diagonal, but all square matrices that have exactly one nonzero general diagonal are necessarily full-rank. Thus if we are interested in determining $\text{rank}(A)$ by finding the size of the largest square submatrix in $A$ that is full-rank, we can instead search for the largest square submatrix in $A$ that has exactly one nonzero general diagonal, and that will lead us to lower bounds on the rank of $A$.

\begin{prop}
\label{prop:ark-lb}
Given a field $\F$ and stencil $H \in \{0, \star\}^{m \times n}$, we have the inequality $\rk{\F}{H} \ge \vrk{H}$.
\end{prop}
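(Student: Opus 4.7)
The plan is to use the Leibniz formula directly, as already suggested in the paragraph preceding the definition of visible rank. Let $r = \vrk{H}$ and let $W \in \F^{m \times n}$ be any $\F$-witness attaining $\rk{\F}{H}$. By definition of visible rank, there is a choice of row indices $I \subseteq [m]$ and column indices $J \subseteq [n]$ with $|I|=|J|=r$ such that the $r \times r$ sub-stencil $M := H[I,J]$ has exactly one star diagonal. I would show that the corresponding $r \times r$ submatrix $W' := W[I,J]$ is nonsingular over $\F$; this immediately gives $\rank(W) \ge r$, and since $W$ was an arbitrary $\F$-witness, $\rk{\F}{H} \ge r$.

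To show $\det(W') \ne 0$, I would expand via Leibniz:
\[
\det(W') \;=\; \sum_{\pi \in S_r} \text{sgn}(\pi) \prod_{i=1}^{r} W'[i, \pi(i)] \ .
\]
By Definition~\ref{def:alg-wit}, $W'[i,\pi(i)] \ne 0$ if and only if $M[i,\pi(i)] = \star$. Hence the term indexed by $\pi$ is nonzero in $\F$ exactly when the general diagonal $\{(i,\pi(i))\}_{i=1}^{r}$ of $M$ is a star diagonal. By the hypothesis that $M$ is visibly full rank, there is a unique such $\pi$, call it $\pi^*$, so the Leibniz sum collapses to the single term $\text{sgn}(\pi^*) \prod_{i} W'[i, \pi^*(i)]$, which is a product of nonzero elements of $\F$ and therefore nonzero.

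There is really no obstacle here beyond keeping track of the indexing: the statement is essentially a restatement of the motivating discussion leading up to Definition~\ref{def:vis-full-rank}, promoted from the square visibly-full-rank case to the general rectangular case by restricting to the witnessing $r \times r$ sub-stencil. The only mild subtlety is ensuring that the conclusion $\det(W') \ne 0$ holds over \emph{every} field $\F$ — but this is automatic, since the argument never divides and only uses that a product of nonzero field elements is nonzero (i.e., that $\F$ has no zero divisors), which holds in any field.
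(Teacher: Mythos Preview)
Your proof is correct and follows essentially the same approach as the paper: pick a maximal visibly full rank sub-stencil, observe that the corresponding submatrix of any $\F$-witness has exactly one nonzero general diagonal, and conclude via the Leibniz formula that its determinant is nonzero. The only difference is cosmetic---you spell out the Leibniz expansion explicitly, whereas the paper simply asserts that a unique nonzero general diagonal forces $\det(W_0) \ne 0$.
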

\begin{proof}
Let $W \in \F^{m \times n}$ be algebraic witness of $H$. Let $H_0$ be a $k \times k$ square sub-stencil of $H$ that is visibly independent. Let $W_0$ be the corresponding $k \times k$ submatrix of $H_0$ in $W$. Because $H_0$ has exactly one general diagonal of $\star$'s, then $W_0$ has exactly one general diagonal of nonzero entries, which implies $\det(W_0) \neq 0$. This proves that $\rank(W) \ge \rank(W_0) = k$. By picking the largest such $H_0$, we deduce that $\rank(W) \ge \vrk{H}$. Since $W$ was an arbitrary $\F$-witness , this proves $\rk{\F}{H} \ge \vrk{H}$.
\end{proof}

\subsection{Combinatorial properties of visible rank}
\label{subsec:vrk-combinatorics}
In this subsection, we will be proving some properties about visible rank. In particular, we will show that any visibly full rank square stencil $M$ is permutationally equivalent to an upper triangular stencil, and from this observation, we will be able to upper bound the visible rank by the largest rectangle of zeros in the stencil, which will be our main tool in our constructions of $q$-LCC and $t$-DRGP stencils. We also show an upper bound on the $\F$-rank of a stencil by the maximum number of zeros in each row over any field $\F$ satisfying $\abs{\F} \ge n$ (where $n$ is the number of columns).

Given two stencils $H_1, H_2 \in \{0, \star\}^{m \times n}$, we say that $H_1$ is \emph{permutationally equivalent} to $H_2$ if there are permutations $\pi : [m] \to [m]$ and $\sigma : [n] \to [n]$ such that $H_1[i,j] = H_2[\pi(i), \sigma(j)]$ for all $i \in [m]$ and $j \in [n]$. For such $H_1$ and $H_2$, we introduce the notation $H_2 = (H_1)_{\pi, \sigma}$ to say that $H_2$ is obtained from $H_1$ by permuting the rows with the permutation $\pi$ and the columns by the permutation $\sigma$ (We remark that row permutations commute with column permutations). 

\begin{lemma}
\label{lemma:upper-tri}
Let $M \in \{0, \star\}^{n \times n}$ be visibly full rank. Then there exists permutations $\pi$ and $\sigma$ on $[n]$ such that $N \coloneqq M_{\pi, \sigma}$ is an upper triangular stencil. That is $N[i,i] = \star$ and $N[i,j] = 0$ for all $i,j \in [n]$ with $i > j$.
\end{lemma}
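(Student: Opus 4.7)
The plan is to reduce to the case where the unique star diagonal is already the main diagonal, and then to use the uniqueness assumption to extract a topological ordering of the off-diagonal $\star$ pattern.

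First I would use the hypothesis directly: since $M$ is visibly full rank, there is a (unique) permutation $\pi_0 \in S_n$ with $M[i, \pi_0(i)] = \star$ for every $i$. Applying the column permutation $\sigma_0 = \pi_0^{-1}$ yields a stencil $M' := M_{\mathrm{id}, \sigma_0}$ whose main diagonal is entirely $\star$, and moreover the identity is still the only star diagonal of $M'$ (any other such diagonal would, composed with $\pi_0$, produce a second star diagonal for $M$). So it suffices to prove the lemma under the extra assumption that $M[i,i] = \star$ for all $i$ and the identity is the unique star diagonal, and produce a single permutation $\tau$ such that $N := M_{\tau, \tau}$ is upper triangular with $\star$'s on the diagonal.

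The key step is to introduce the auxiliary directed graph $D$ on vertex set $[n]$ where $i \to j$ is an edge iff $i \neq j$ and $M[i,j] = \star$, and to show $D$ is acyclic. Suppose for contradiction that $D$ contains a directed cycle $i_1 \to i_2 \to \cdots \to i_k \to i_1$ with $k \geq 2$. Define the permutation
\[
\sigma(i_j) = i_{j+1 \bmod k} \quad (j = 1, \dots, k), \qquad \sigma(i) = i \text{ otherwise.}
\]
Then $M[i, \sigma(i)] = \star$ for every $i$: on the cycle this holds by the choice of the edges of $D$, and off the cycle it holds because $\sigma$ fixes $i$ and $M[i,i] = \star$. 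Thus $\sigma$ gives a star diagonal different from the identity, contradicting uniqueness. Hence $D$ is a DAG.

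Any topological ordering of $D$ yields a permutation $\tau$ of $[n]$ so that every edge $i \to j$ of $D$ has $\tau(i) < \tau(j)$; equivalently, whenever $M[i,j] = \star$ with $i \neq j$, we have $\tau(i) < \tau(j)$. Applying $\tau$ to both rows and columns to form $N := M_{\tau, \tau}$, the diagonal stays $\star$ (same permutation on both sides), and any sub-diagonal entry $N[a,b]$ with $a > b$ corresponds to $M[\tau^{-1}(a), \tau^{-1}(b)]$ with $\tau(\tau^{-1}(a)) > \tau(\tau^{-1}(b))$, which by the topological property cannot be a $\star$, hence is $0$. I expect the only subtle point to be the cycle-to-permutation argument — once one writes down $\sigma$ and verifies it is a valid star diagonal distinct from the identity, the rest is a routine DAG/topological-sort argument.
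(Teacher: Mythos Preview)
Your proof is correct and takes a genuinely different route from the paper's. The paper argues on the bipartite row–column graph: it shows (via an alternating-path/cycle argument) that a visibly full rank stencil must contain a row with a single $\star$, then peels that row off and inducts. You instead normalize so the unique star diagonal is the main diagonal, build the directed graph on $[n]$ of off-diagonal $\star$'s, observe that any directed cycle would yield a second star diagonal, and finish with a topological sort applied symmetrically to rows and columns. The two arguments share the same core insight (a second star diagonal exists iff there is an alternating cycle, which in your normalized picture is exactly a directed cycle), but your packaging is tighter: it avoids the explicit induction and makes the upper-triangular structure drop out in one shot from the DAG property. The paper's version, on the other hand, ties in more directly with the uniquely restricted matching viewpoint it discusses elsewhere, and its ``row with one $\star$'' claim is reused implicitly in later arguments. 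Either proof is fine here.
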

\begin{proof}
We first prove the following claim.
\begin{claim}
\label{claim:leaf}
Let $M \in \{0, \star\}^{n \times n}$ be visibly full rank. Then there exists a row in $M$ with exactly one $\star$.
\end{claim}
\begin{claimproof}
Assume (for the sake of a contradiction) that such a row doesn't exist. Index the rows of $M$ by $R = \{r_1, \ldots , r_n\}$ and the columns by $C = \{c_1, \ldots , c_n\}$. Let $G = (R,C,S)$ be a bipartite graph on the rows and columns of $M$ with edges $S$, where $S$ is the set of $\star$'s in $M$. Because $M$ is visibly independent, $G$ has a unique perfect matching. Moreover, by our initial assumption, $d_G(v) \ge 2$ for all $v \in L$.

Color the edges of the unique matching of $G$ red and all remaining edges of $G$ blue. Let $P \subseteq S$ be the maximum alternating path in $G$. Since every vertex in $G$ is incident to a red edge, then by the maximality of $P$, both edges at the ends of $P$ must be red. Let $v_1$ be a vertex at one of the ends of $P$. Since $d_G(v_1) \ge 2$ and $v_1$ is incident to only one red edge, then $v_1$ is incident to a blue edge, but because our path is maximal, then this blue edge $(v_1, v_2)$ must satisfy $v_2 \in P$. This implies that our graph $G$ has an alternating cycle $C$ with red edges $R_C$ and blue edges $B_C$. Notice that both $R_C$ and $B_C$ match the same vertex sets, and since $M$ is a matching, then so is $(M \setminus R_C) \cup B_C$, but that's a contradiction by the uniqueness of $M$.
\end{claimproof}

We now proceed to our lemma by induction on $n$. The base case $n=1$ is immediate to see. As for the induction step, index the rows and columns of $M$ by $\{1, \ldots , n\}$. We know by Claim~\ref{claim:leaf} that there exists a row $i$ in $M$ which has exactly a single $\star$ in column $j$. Let $M'$ be the $n \times n$ matrix produced from $M$ by transposing row $i$ with row $n$ and column $j$ with column $n$. Because $M$ is visibly independent, so is $M'$. Moreover, the $n$'th row of $M'$ has exactly one $\star$ at column $n$, meaning that any $\star$ diagonal in $M'$ must contain $M[n,n]$. Thus the $(n-1) \times (n-1)$ minor $M'_0$, which is obtained by deleting row $n$ and column $n$ of $M'$, has the same number of general diagonals as $M'$. This means $M'_0$ is visibly independent. By our induction hypothesis, we can permute the rows and columns of $M'_0$ to make it upper triangular, and thus the same holds for $M'$. Since $M'$ is a permutation of $M$, we conclude that $M$'s rows and columns can be permuted to make it upper triangular. This proves our induction step.
\end{proof}

Thus we can characterize all visibly full rank stencils, and that help us obtain the following upper bound on the visible rank.

\begin{lemma}
\label{lemma:biclique-bound}
Given an $m \times n$ stencil $H$, if there are natural numbers $a,b$ such that $H$ has no $a \times b$ sub-stencil of zeros, then $\vrk{H} < a+b$.
\end{lemma}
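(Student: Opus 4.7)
The plan is to prove the contrapositive: if $\vrk{H} \ge a+b$, then $H$ must contain an $a \times b$ sub-stencil of zeros. This reduces the problem to exhibiting a block of zeros inside a visibly full rank square sub-stencil of sufficient size, which is exactly the kind of structural information that Lemma~\ref{lemma:upper-tri} provides.

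First, I would invoke the visible rank assumption to extract a square sub-stencil $M$ of $H$ of size $(a+b) \times (a+b)$ that is visibly full rank. Applying Lemma~\ref{lemma:upper-tri}, there exist permutations $\pi, \sigma$ on $[a+b]$ such that $N := M_{\pi,\sigma}$ is upper triangular, meaning $N[i,i] = \star$ and $N[i,j] = 0$ whenever $i > j$. This is the crucial structural step.

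Next, I would locate an explicit $a \times b$ all-zero block inside $N$. The natural candidate is the bottom-left block formed by the row indices $\{b+1, b+2, \ldots, b+a\}$ and the column indices $\{1, 2, \ldots, b\}$. For any entry $N[i,j]$ in this block, we have $i \ge b+1 > b \ge j$, so $i > j$ and thus $N[i,j] = 0$ by upper triangularity. This gives an $a \times b$ sub-stencil of zeros inside $N$. Since $N$ was obtained from $M$ only by permuting its rows and columns, and $M$ itself is a sub-stencil of $H$, this block corresponds (via the composed row/column selections) to an $a \times b$ sub-stencil of zeros inside $H$, contradicting the hypothesis.

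There is essentially no hard part here once Lemma~\ref{lemma:upper-tri} is in hand; the only thing to double-check is the arithmetic that forces the $i > j$ condition on the chosen block, and the bookkeeping to confirm that ``sub-stencil'' is preserved under row/column permutations (which is immediate from the definition, since a sub-stencil is just a choice of row and column subsets). Thus the entire argument is a short contrapositive wrapped around the upper-triangular normal form.
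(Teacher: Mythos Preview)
Your proposal is correct and follows essentially the same approach as the paper: assume a visibly full rank $(a+b)\times(a+b)$ sub-stencil exists, apply Lemma~\ref{lemma:upper-tri} to permute it into upper triangular form, and read off the $a\times b$ all-zero block in the lower-left corner. The only extra detail you supply is the explicit choice of rows $\{b+1,\ldots,b+a\}$ and columns $\{1,\ldots,b\}$, which the paper leaves implicit.
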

\begin{proof}
Assume (for the sake of a contradiction) that $H$ has a $(a+b) \times (a+b)$ square sub-stencil $H_0$ that is visibly independent. By Lemma~\ref{lemma:upper-tri}, we know that $H_0$ is permutationally equivalent to an $(a+b) \times (a+b)$ upper triangular stencil. Since such an upper triangular stencil has a $a \times b$ sub-stencil of zeros, we arrive to a contradiction.
\end{proof}

We also provide an upper bound on the rank of a stencil by the maximum number of zeros in each rows.

\begin{prop}
\label{prop:zero-set-ub}
For any $m \times n$ stencil $H$, if each row of $H$ has at most $d$ zeros, then $\rk{\F}{H} \le d+1$ for all fields $\F$ such that $|\F| \ge n$.
\end{prop}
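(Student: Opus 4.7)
The plan is to exhibit an explicit $\F$-witness $W$ of $H$ whose rank is at most $d+1$; by Definition~\ref{def:alg-rank} this immediately yields $\rk{\F}{H} \le d+1$. The construction is a standard Vandermonde-style one: since the only combinatorial data I have is the zero pattern, I will encode each row as the evaluation of a low-degree polynomial, chosen so the polynomial vanishes exactly on the columns where $H$ has a $0$.

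Using $|\F| \ge n$, first pick $n$ pairwise distinct elements $\alpha_1, \ldots, \alpha_n \in \F$, one per column. For each row $i$, let $Z_i \subseteq [n]$ be the set of columns where $H[i,j] = 0$, so $|Z_i| \le d$ by hypothesis. Define
\[
W[i,j] \;:=\; \prod_{k \in Z_i} (\alpha_j - \alpha_k).
\]
Because the $\alpha_k$'s are distinct, this product vanishes if and only if $j \in Z_i$, so the zero/nonzero pattern of $W$ matches that of $H$; hence $W$ is a legitimate $\F$-witness of $H$.

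The rank bound falls out of expanding the product. For each row $i$, the assignment $\alpha \mapsto \prod_{k \in Z_i}(\alpha - \alpha_k)$ is a polynomial in $\alpha$ of degree $|Z_i| \le d$, so there exist coefficients $c_{i,0}, \ldots, c_{i,d} \in \F$ with
\[
W[i,j] \;=\; \sum_{\ell=0}^{d} c_{i,\ell}\, \alpha_j^{\ell}.
\]
This is the matrix identity $W = C V$, where $C \in \F^{m \times (d+1)}$ has entries $c_{i,\ell}$ and $V \in \F^{(d+1) \times n}$ is the Vandermonde matrix $V[\ell,j] = \alpha_j^{\ell}$. Subadditivity of rank then gives $\rank(W) \le d+1$, completing the argument.

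There is no substantive obstacle: the only place the hypothesis $|\F| \ge n$ enters is to furnish $n$ distinct evaluation points $\alpha_j$, and both the matching of the $\star$-pattern and the degree bound on each row-polynomial are immediate from distinctness of the $\alpha_j$'s. The only step worth writing out carefully is the verification that $W$ is a valid $\F$-witness, which I would do in one line as above.
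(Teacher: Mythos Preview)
Your proof is correct and essentially identical to the paper's: both pick $n$ distinct field elements, build the witness as evaluations of the degree-$\le d$ polynomials $\prod_{k\in Z_i}(x-\alpha_k)$, and conclude the rank bound from the fact that all rows lie in the span of $\{1,x,\dots,x^d\}$. The only nitpick is terminological: the bound $\rank(CV)\le d+1$ follows from the factorization through $\F^{d+1}$ (i.e.\ $\rank(AB)\le\min(\rank A,\rank B)$), not from ``subadditivity'' of rank.
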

\begin{proof}
Since $|\F| \ge n$, then we can label the columns of $H$ by pairwise distinct entries $a_1, \ldots , a_n \in \F$. For row $i$, let the columns that are zero along row $i$ be $Z_i \subseteq \{a_1, \ldots , a_n\}$. Consider the polynomial $p_i(x) \coloneqq \prod_{a \in Z_i}{(x-a)}$. Notice that $p_i$ evaluates to zero on $Z_i$, while on $\{a_1, \ldots , a_n\} \setminus Z_i$, it evaluates to nonzero values. This means that the matrix $E \in \F^{m \times n}$ defined by $E_{ij} = p_i(a_j)$ is an $\F$-witness of $H$. Moreover, since $|Z_i| \le d$ for each $i \in [m]$, then we know that the monomials $\{1, x, \ldots , x^d\}$ span the polynomials $\{p_1, \ldots , p_m\}$. This shows that $\rank(E) \le d+1$ and thus $\rk{\F}{H} \le d+1$.
\end{proof}

We remark that the bound $\abs{\F} \ge n$ is crucial for Propsition~\ref{prop:zero-set-ub}. Consider the stencil $D \in \{0, \star\}^{n \times n}$ that has $\star$'s everywhere except on the diagonal. Such a stencil has a visible rank of $2$, but its rank over $\F_2$ is at least $n-1$.

\subsection{A Rank-Nullity Type Theorem Between Stencils and Symmetric Spanoids}
\label{subsec:spanoids}
In this subsection, we formally setup spanoids and prove a rank-nullity type theorem between symmetric spanoids and stencils.\smallskip

A \emph{spanoid} $\mathcal{S}$ is a collection of inference rules in the form of pairs $(S,i)$, which are written in the form $S \to i$, where $S \subseteq [n]$ and $i \in [n]$. The objective in spanoids is to determine the size of the smallest subset $B \subseteq [n]$ such that one can use the inference rules of $\mathcal{S}$ to obtain all of $[n]$. Spanoids were introduced in \cite{DGGW20} as an abstraction of LCCs, wherein they proved that the spanoid analog of $q$-LCCs satisfy the upper bound $\widetilde{O}(n^{(q-2)/(q-1)})$ on the rank. Moreover, they also showed that there are $q$-LCC spanoids for which their rank is $\widetilde{\Omega}(n^{(q-2)/(q-1)})$.

Let us setup the definitions needed for spanoids. A derivation in $\mathcal{S}$ of $i \in [n]$ from a set $T \subseteq [n]$ is a sequence of sets $T_0=T, T_1, \dots, T_r$ satisfying $T_j = T_{j-1} \cup \{i_j\}$ for some $i_j \in [n]$, $j \in [r]$, and with $i_r = i$. Further, for every $j \in [r]$, there is a rule $(S_{j-1},i_j)$ in $\mathcal{S}$ for some $S_{j-1} \subseteq T_{j-1}$. The \emph{span} of a set $T \subseteq [n]$, denoted $\mathsf{span}_{\mathcal{S}}(T)$, is the set of all $i \in [n]$ for which there is a derivation of $i$ from $T$. The rank of a spanoid, denoted $\mathsf{rank}(\mathcal{S})$, is the size of the smallest set $T \subseteq [n]$ such that $\mathsf{span}_{\mathcal{S}}(T) =[n]$. Finally, we define symmetric spanoids below.

\begin{defn}[Symmetric Spanoids]
\label{def:symm-spanoid}
A spanoid $\mathcal{S}$ over $[n]$ is a symmetric spanoid if there is a collection of sets $\{S_1, \ldots , S_m\}$ so that the inference rules of $\mathcal{S}$ are of the form $S_j \setminus \{i\} \to \{i\}$ for any $i \in S_j$ and $j \in [m]$.
\end{defn}

Now we may proceed to prove our theorem that relates the rank of symmetric spanoids with the visible rank of an associated stencil. 

\begin{theorem}
\label{thm:rank-nullity}
For any symmetric spanoid $\mathcal{S}$ over $[n]$ with $m$ sets, there exists a canonical stencil $H$ of size $m \times n$ such that for any collection of columns $C \subseteq [n]$ in $H$, they are visibly independent if and only if $\mathsf{span}_{\mathcal{S}}([n] \setminus C) = [n]$. Moreover, we have $\vrk{H} + \mathsf{rank}(\mathcal{S}) = n$.
\end{theorem}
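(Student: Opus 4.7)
The plan is to define the canonical stencil $H \in \{0,\star\}^{m \times n}$ by $H[j,i] = \star$ iff $i \in S_j$, and then to prove the column-wise equivalence ``$C$ is visibly independent iff $\mathsf{span}_{\mathcal{S}}([n] \setminus C) = [n]$.'' The rank-nullity identity $\vrk{H} + \mathsf{rank}(\mathcal{S}) = n$ then drops out: $\vrk{H}$ is the maximum $|C|$ over visibly independent $C$, which by the equivalence is the maximum $|C|$ for which $[n]\setminus C$ spans, i.e., $n$ minus the minimum size of a spanning set, which is $n - \mathsf{rank}(\mathcal{S})$.

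For the forward direction, I would unwind a derivation of every $i \in C$ from $T_0 = [n]\setminus C$, obtaining a sequence $T_0 \subset T_1 \subset \cdots \subset T_{|C|} = [n]$ where at step $j$ a rule coming from some set $S_{k_j}$ with $i_j \in S_{k_j}$ and $S_{k_j} \setminus \{i_j\} \subseteq T_{j-1}$ introduces a new element $i_j \in C$. I would then consider the $|C| \times |C|$ sub-stencil of $H$ with rows $k_1,\ldots,k_{|C|}$ and columns $i_1,\ldots,i_{|C|}$ in this order. Since $S_{k_j} \cap C \subseteq \{i_1,\ldots,i_j\}$ by the derivation condition, the sub-stencil is lower triangular with $\star$'s on the diagonal (the diagonal entry is $\star$ because $i_j \in S_{k_j}$). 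The unique star diagonal is the main one, so $C$ is visibly independent.

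For the reverse direction, suppose a $|C| \times |C|$ sub-stencil on columns $C$ is visibly full rank. By Lemma~\ref{lemma:upper-tri}, after permuting rows and columns we may write the rows as $k_1,\ldots,k_{|C|}$ and the columns as $i_1,\ldots,i_{|C|}$ so that the sub-stencil is upper triangular with $\star$'s on the diagonal; equivalently, $i_j \in S_{k_j}$ and $S_{k_j} \cap C \subseteq \{i_j, i_{j+1}, \ldots, i_{|C|}\}$. I would then derive the elements of $C$ in \emph{reverse} order: first apply the rule from $S_{k_{|C|}}$ (whose only element in $C$ is $i_{|C|}$) to derive $i_{|C|}$ from $[n]\setminus C$; then apply the rule from $S_{k_{|C|-1}}$, whose non-$i_{|C|-1}$ elements lie in $([n]\setminus C) \cup \{i_{|C|}\}$, to derive $i_{|C|-1}$; and so on. An induction on the reverse index shows that at the step deriving $i_j$, the set $S_{k_j}\setminus\{i_j\}$ is already contained in the current span, so the derivation goes through and the full set $C$ is derived.

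The conceptually delicate point is the direction mismatch in the reverse implication: the natural triangularization from Lemma~\ref{lemma:upper-tri} orders rows so that each $S_{k_j}\cap C$ is supported to the \emph{right}, while derivations must proceed from sets whose $C$-intersection is already covered, forcing us to process the columns in reverse order. Once this ordering is set correctly, the rest is a bookkeeping exercise; the rank-nullity conclusion follows with no additional work.
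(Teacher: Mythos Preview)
Your proposal is correct and follows essentially the same approach as the paper: define $H$ via the supports of the $S_j$, use the triangularization lemma (Lemma~\ref{lemma:upper-tri}) to pass from visible independence to a reverse-ordered derivation, unwind a derivation to get a lower-triangular witness for the converse, and read off the rank-nullity identity from the equivalence. The only difference is cosmetic---you present the two implications in the opposite order from the paper and explicitly flag the reverse-ordering subtlety, which the paper handles identically but without comment.
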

\begin{proof}
Define $H[i,j] = \star$ if $j \in S_i$ and zero otherwise. We claim that such $H$ satisfies the conditions. Indeed, suppose that the columns $C = \{c_1, \ldots , c_k\}$ are visibly independent. Then that means there are rows $r_1, \ldots , r_k$ so that the $k \times k$ sub-stencil formed by these columns and rows is visibly full rank. Denote this matrix as $H_C$. By Lemma~\ref{lemma:upper-tri}, we can find permutations $\pi, \sigma$ over $[k]$ such that the matrix $H_C' \coloneqq H_{\pi, \sigma}$ is upper triangular. In terms of spanoids, that means $c_{\sigma(i)} \in S_{r_{\pi(i)}}$ and $S_{r_{\pi(i)}} \subseteq [n] \setminus \{c_{\sigma(1)}, \ldots , c_{\sigma(i-1)}\}$ for all $i \in [k]$. We can rewrite the last set containment as $S_{r_{\pi(i)}} \subseteq ([n] \setminus C) \cup \{c_{\sigma(i)}, \ldots , c_{\sigma(k)}\}$. Thus we apply the inference rules $S_{r_{\pi(k)}} \setminus c_{\sigma(k)} \to c_{\sigma(k)}, S_{r_{\pi(k-1)}} \setminus c_{\sigma(k-1)} \to c_{\sigma(k-1)}, \ldots , S_{r_{\pi(1)}} \setminus c_{\sigma(k)} \to c_{\sigma(1)}$ in that order with the set $[n] \setminus C$, to deduce that the set $[n] \setminus C$ spans the set $[n]$ in $\mathcal{S}$. Thus $\mathsf{span}_{\mathcal{S}}([n] \setminus C) = [n]$.

Now, suppose $\mathsf{span}_{\mathcal{S}}([n] \setminus C) = [n]$. Then that means we can find sets $S_{i_1}, \ldots , S_{i_k}$ and a permutation $\sigma$ over $[k]$ such that $c_{\sigma(j)} \in S_{i_j}$ and we can apply the inference rules $S_{i_1} \setminus \{c_{\sigma(1)}\} \to c_{\sigma(1)}, \ldots , S_{i_k} \setminus \{c_{\sigma(k)}\} \to c_{\sigma(k)}$ in that order to span $[n]$ from $[n] \setminus C$ in $\mathcal{S}$. That implies then that $S_{i_j} \subseteq ([n] \setminus C) \cup \{c_{\sigma(1)}, \ldots , c_{\sigma(j)}\}$. In terms of the stencil $H$, that means $H[i_j, \sigma(j)] = \star$ and $H[i_{\ell}, \sigma(j)] = 0$ for $\ell < j$. Thus the $k \times k$ sub-stencil $H'$ obtained by restricting to the columns $c_{\sigma(1)}, \ldots , c_{\sigma(k)}$ and rows $i_1, \ldots , i_k$ in that order forms a lower triangular stencil, which is permutationally equivalent to an upper triangular stencil. Thus we deduce that the set of columns $C$ is visibly independent.

Now, for any set $S \subseteq [n]$ such that $\mathsf{span}_{\mathcal{S}}(S) = [n]$, we know that $[n] \setminus S$ is visibly independent in $H$. Thus $n - \abs{S} \le \vrk{H}$. Since this holds for any such set $S$, then we deduce that $\mathsf{rank}(\mathcal{S}) + \vrk{H} \ge n$. On the other hand, for any collection of columns $C$ in $H$ that is visibly independent, we know that $\mathsf{span}_{\mathcal{S}}([n] \setminus C) = [n]$. This implies $n - \abs{C} \ge \mathsf{rank}(\mathcal{S})$. Since this holds for any visibly independent set of columns $C$ in $H$, then we find that $n \ge \mathsf{rank}(\mathcal{S}) + \vrk{H}$. Hence $\mathsf{rank}(\mathcal{S}) + \vrk{H} = n$.
\end{proof}

\section{Constructing $q$-LCC Stencils}
\label{lcc}
In this section, we define $q$-LCC stencils and construct $q$-LCC stencils whose visible rank achieves the known lower bounds up to polylog factors.

\begin{defn}[$q$-LCC Stencils]
\label{def:lcc-stencil}
For $\delta > 0$, a $\delta n^2 \times n$ stencil $H$ whose rows are labelled by $[n] \times [\delta n]$ is said to be a $q$-LCC stencil if $H[(i,j),i] = \star$ for all $(i,j) \in [n] \times [\delta n]$. Moreover, for every $k \in [n] \setminus \{i\}$, the collection of entries \{$M[(i,1),k], \ldots , M[(i,t),k]\}$ has at most one $\star$ (where $t = \delta n$), and the number of $\star$'s in each row of $H$ is at most $q+1$.
\end{defn}

Now we proceed to prove our main theorem for this section.

\begin{theorem}
\label{thm:lcc-construction}
For $q \ge 3$, there exists a $q$-LCC stencil $M$ whose visible rank is at most $n - \widetilde{\Omega}\left(n^{(q-2)/(q-1)}\right)$.
\end{theorem}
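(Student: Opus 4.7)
The plan is to construct $H$ via its associated symmetric spanoid and invoke Theorem~\ref{thm:rank-nullity}. Since $\vrk{H} + \mathsf{rank}(\mathcal{S}) = n$ whenever $H$ is the canonical stencil of a symmetric spanoid $\mathcal{S}$ with set family $\{S_1,\ldots,S_m\}$, the task reduces to exhibiting a symmetric spanoid on $[n]$ that captures a $q$-LCC structure and has rank at least $\widetilde{\Omega}(n^{(q-2)/(q-1)})$.

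First, I would translate Definition~\ref{def:lcc-stencil} into the symmetric-spanoid language. The stencil built from sets $\{S_1,\ldots,S_m\}$ of size at most $q+1$ via $H[j,i] = \star$ iff $i \in S_j$ is a $q$-LCC stencil exactly when every $i \in [n]$ lies in at least $\delta n$ of the sets and any two sets both containing $i$ intersect only in $\{i\}$ (so that the $q$ other coordinates of the rows through $i$ are disjoint across rows, enforcing the disjoint-repair-group condition). Combinatorially, this is precisely a partial Steiner system $S(2,q+1,n)$ in which every point has degree $\Theta(n)$. Such a system can be produced by the alteration method: include each $(q+1)$-subset of $[n]$ independently with probability $p = c/n^{q-1}$, and then delete one edge from every pair of edges sharing two or more points. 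A standard first/second moment calculation gives $\Theta(n^2)$ surviving edges, each point lying in $\Theta(n)$ edges, and the pairwise-intersection constraint met, so we obtain a symmetric spanoid $\mathcal{S}$ with the required $q$-LCC structure.

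The core task is lower-bounding $\mathsf{rank}(\mathcal{S})$ by $\widetilde{\Omega}(n^{(q-2)/(q-1)})$. I would adapt the iterative/inference-tree argument of \cite{DGGW20} for general $q$-LCC spanoids to this symmetric setting. The key structural input is the partial-Steiner property: for any current span $T$, the number of $q$-subsets of $T$ that extend to a full set $S_j \in \mathcal{S}$ is tightly controlled, since each pair in $T$ belongs to at most one $S_j$. Consequently, the growth rate of $\mathsf{span}_\mathcal{S}(B)$ from a seed $B$ obeys essentially the same bootstrap-percolation type recurrence as in the asymmetric case, and an analogous counting argument forces $|B| = \widetilde{\Omega}(n^{(q-2)/(q-1)})$ whenever $\mathsf{span}_{\mathcal{S}}(B) = [n]$.

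The main obstacle is verifying that the extra rules that symmetry introduces (each $S_j$ contributes $q+1$ rules rather than one) do not accelerate the span enough to spoil the exponent $(q-2)/(q-1)$. This is where the degree bound $\Theta(n)$ per point and the pairwise-intersection constraint do the work: together they ensure that the number of rules whose left-hand side is contained in $T$ grows like $O(|T|^{q}/n^{q-1})$, which is exactly the regime analyzed in \cite{DGGW20}. Once the rank lower bound on $\mathcal{S}$ is in hand, applying Theorem~\ref{thm:rank-nullity} yields the desired stencil $H$ with $\vrk{H} \leq n - \widetilde{\Omega}(n^{(q-2)/(q-1)})$.
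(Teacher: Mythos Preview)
Your route is genuinely different from the paper's, and the difference is substantive rather than cosmetic. The paper does \emph{not} go through spanoids at all: it samples a random $q$-LCC stencil by choosing, for each center $i$ independently, $\delta n$ uniformly random disjoint $q$-subsets of $[n]\setminus\{i\}$ (so the resulting row-supports do \emph{not} form a partial Steiner system), and then invokes Lemma~\ref{lemma:biclique-bound}. That lemma reduces ``$\vrk{H} < n-k$'' to the purely combinatorial event that no $(s-k)\times(n-s)$ all-zero sub-stencil exists, with $s=n^{(q-2)/(q-1)}$ and $k=s/\log n$; a single union bound over choices of $s$ columns and $s-k$ rows finishes the proof. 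No inference-tree or bootstrap analysis is needed, and the symmetric-spanoid conclusion is obtained only afterwards, as a corollary via Theorem~\ref{thm:rank-nullity}.

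Your plan inverts this logic: build a partial Steiner system, lower-bound the rank of its symmetric spanoid, and then apply rank-nullity. The structural translation is fine, but the crucial middle step is exactly the question from \cite{DGGW20} that this theorem is meant to resolve, so you cannot simply invoke \cite{DGGW20}; you would have to carry out the adaptation yourself. You correctly flag the obstacle (symmetry adds $q+1$ rules per block), but the resolution you sketch is incomplete. The quantity you cite, ``rules with left-hand side in $T$ is $O(|T|^q/n^{q-1})$,'' is the expected number of rules \emph{per target point}, not the total, and a one-step count of this kind does not by itself control the full span closure; one still has to rule out the multi-round blow-up for \emph{every} seed set simultaneously, which is where the real work lies. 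If you push that analysis through carefully you will essentially be redoing, in spanoid language, the same union-bound computation the paper performs directly on zero rectangles. The paper's approach via Lemma~\ref{lemma:biclique-bound} is the cleaner path and avoids the bootstrap machinery entirely.
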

\begin{proof}
For $(i,j) \in [n] \times [\delta n]$. Define $r^i_j \coloneqq \{k \in [n] : H[(i,j), k] = \star\}$ to be the support of row $(i,j)$, and let $G_i \coloneqq \{r^i_j : j \in [\delta n]\}$ be the $\delta n$ groups for column $i$. We shall show that by picking the groups $G_i$ uniformly at random, the visible rank will at most be $n - n^{(q-2)/(q-1)} / \log{n}$ with high probability.

Consider natural numbers $s > k$ where $s = n^{\frac{q-2}{q-1}}$ and $k = s/\log{n} = n^{\frac{q-2}{q-1}}/\log{n}$. Let $E_{s,k}$ be the event that there aren't any $(s-k) \times (n-s)$ sub-stencils in $H$ that are all zeros. We shall show that $E_{s,k}$ occurs with high probability, which will yield an upper bound of $n-k$ on the visible rank via Lemma~\ref{lemma:biclique-bound}. We will use an equivalent form of the event $E_{s,k}$, which is the event that there are no $s-k$ rows in $H$ whose union of supports is at most $s$. Now, consider a collection of columns $C$ of size $s$ and a collection of $s-k$ rows $R$, where $R$ and $C$ denote the collection of their supports. Enumerate $R \cap G_i = \{r^i_1, \ldots , r^i_{a_i}\}$. By the chain rule, the definition of $G_i$, and the independence of the $G_i$'s, we find that
\begin{align*}
\Pr{\bigwedge_{r \in R}{r \subseteq C}} &= \prod_{i=1}^n{\Pr{\bigwedge_{r \in R \cap G_i}{r \subseteq C}}} \\
&= \prod_{i=1}^n{\prod_{j=1}^{a_i}{\Pr{r^i_j \subseteq C \; \bigg| \; r^i_k \subseteq C \text{ for $k \in [j-1]$}}}} \\
&= \prod_{i=1}^n{\prod_{j=1}^{a_i}{\Pr{r^i_j \setminus \{i\} \subseteq C \setminus \{r^i_1, \ldots , r^i_{j-1}\} \; \bigg| \; r^i_k \subseteq C \text{ for $k \in [j-1]$}}}} \\
&= \prod_{i=1}^n{\prod_{j=1}^{a_i}{\frac{\binom{s-(j-1)q-1}{q}}{\binom{n-(j-1)q-1}{q}}}} \le \left(\frac{\binom{s-1}{q}}{\binom{n-1}{q}}\right)^{s-k} \le \left(\frac{s}{n}\right)^{q(s-k)}
\end{align*}
Therefore, from the definition of $E_{s,k}$, by applying a Union Bound over all possible collections of columns $C$ of size $s$ and $s-k$ collections of rows $R$ (note that $R$ must be from the $\delta ns$ rows that form the groups of $C$ for them to be contained in $C$) and use the bound $\binom{a}{b} \le \left(\tfrac{ea}{b}\right)^b$, we deduce that
\begin{align*}
\Pr{E_{s,k}} \le \binom{n}{s}\binom{\delta ns}{s-k}\left(\frac{s}{n}\right)^{q(s-k)} &\le \left(\frac{en}{s}\right)^s\left(\frac{e\delta ns}{s-k}\right)^{s-k}\left(\frac{s}{n}\right)^{q(s-k)} \\
&= \left(\frac{en}{s}\right)^k\left(\frac{e^2\delta s^{q-1}}{\left(1-\frac{k}{s}\right)n^{q-2}}\right)^{s-k} \\
&= \left(en^{1/(q-1)}\right)^k\left(\frac{e^2\delta}{\left(1-\frac{1}{\log{n}}\right)}\right)^{n^{\frac{q-2}{q-1}}\left(1-\frac{1}{\log{n}}\right)}
\end{align*}
Thus for small enough $\delta < e^{-2}$, the quantity above becomes $\exp\left(-\Omega(n^{(q-2)/(q-1)})\right)$. Thus we can find a $q$-LCC stencil $M$ such that no $s-k$ rows whose support is entirely contained within $s$ columns. That is equivalent to saying that there is no $(s-k) \times (n-s)$ sub-stencil that is all zeros. By Lemma~\ref{lemma:biclique-bound}, we therefore conclude $\vrk{M} < n-k = n -\Omega\left(n^{\frac{q-2}{q-1}}/\log{n}\right)$.
\end{proof}

\section{$t$-DRGP Stencils}
\label{drgp}
We now define the stencils that capture the requirement of each codeword symbol having $t$ disjoint recovery groups.
\begin{defn}[$t$-DRGP Stencils]
\label{def:drgp-stencil}
A $tn \times n$ stencil $H$ whose rows are labelled by $[n] \times [t]$ is said to be a $t$-DRGP stencil if $H[(i,j),i] = \star$ for all $(i,j) \in [n] \times [t]$. Moreover, for every $k \in [n] \setminus \{i\}$, the collection of entries \{$M[(i,1),k], \ldots , M[(i,t),k]\}$ has at most one $\star$.
\end{defn}

Now we proceed to prove our main theorem for this section.

\begin{theorem}
\label{thm:drgp-gap}
For any fixed natural number $t \ge 2$, there exists a $t$-DRGP stencil $H$ satisfying $\vrk{H} \le O(t^2 \log{n})$.
\end{theorem}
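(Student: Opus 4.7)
The plan is to construct a random $t$-DRGP stencil and invoke Lemma~\ref{lemma:biclique-bound}, which reduces the problem to exhibiting integers $a, b$ with $a + b = O(t^2 \log n)$ such that $H$ has no $a \times b$ all-zero sub-stencil. For the construction: independently for each column $i \in [n]$, assign each element $k \in [n] \setminus \{i\}$ to one of the $t$ rows $(i,1),\ldots,(i,t)$ uniformly at random, and place a $\star$ at $H[(i,j),k]$ accordingly; also set $H[(i,j),i] = \star$ for all $j \in [t]$. The disjointness required by Definition~\ref{def:drgp-stencil} holds by design, since each $k \ne i$ lands in exactly one of the $t$ rows indexed by $i$.

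Next, I would fix a row set $R$ of size $a$ and a column set $C$ of size $b$, and bound the probability that the sub-stencil on $R \times C$ is all-zero. Let $s_i = |\{j : (i,j) \in R\}|$, so $\sum_i s_i = a$. The key observations are: (i) if any row $(i,j) \in R$ has $i \in C$, then the forced $\star$ at position $((i,j),i)$ kills the event, contributing probability $0$; (ii) the random assignments for different columns are independent, so the probability factorizes over the distinct column indices appearing in $R$; (iii) conditioned on (i), the $s_i$ rows for column $i$ are all zero on $C$ iff every element of $C$ is assigned to one of the $t - s_i$ groups not in $R$, an event of probability $(1 - s_i/t)^b$; in particular $s_i = t$ forces probability $0$ because the $t$ rows partition $[n] \setminus \{i\} \supseteq C$. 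Combining,
\[
\Pr{R \text{ is zero on } C} \;\le\; \prod_{i} \left(1 - \tfrac{s_i}{t}\right)^{b} \;\le\; \exp\!\left(-\tfrac{ab}{t}\right),
\]
using $1 - x \le e^{-x}$ and $\sum_i s_i = a$.

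A union bound over the $\binom{tn}{a}\binom{n}{b}$ pairs $(R, C)$ then yields a total failure probability at most $\exp\bigl(a \ln(etn/a) + b \ln(en/b) - ab/t\bigr)$. Choosing $a = b = \lceil c \cdot t \log n \rceil$ for a sufficiently large absolute constant $c$ makes this bound go to $0$, so with positive probability some realization of $H$ admits no $a \times b$ zero sub-stencil. Applying Lemma~\ref{lemma:biclique-bound} gives $\vrk{H} < a + b = O(t \log n) \le O(t^2 \log n)$, which matches (and in fact beats) the claimed bound.

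The main obstacle is bookkeeping the correlations among the $t$ rows sharing a column index, since their supports are constrained to partition $[n] \setminus \{i\}$. A careless analysis that treats all $a$ rows as independent would both lose the sharp $\exp(-ab/t)$ factor and miss the hard constraint that $s_i = t$ makes zeroness on any nonempty $C$ impossible. The clean cross-column factorization $\prod_i (1 - s_i/t)^{b}$ handles both: it uses independence across columns while absorbing the within-column partition structure, after which $\sum_i s_i = a$ delivers the final exponential bound.
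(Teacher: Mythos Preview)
Your proof is correct and takes a genuinely different route from the paper's. Both arguments use the same random construction, but the analyses diverge. The paper applies Lemma~\ref{lemma:upper-tri} directly: any visibly full-rank $k \times k$ sub-stencil is permutationally upper triangular, hence contains at least $\binom{k}{2}$ zero entries; from these it extracts at least $\binom{k}{2}/t$ entries lying in pairwise distinct blocks $S_{i,j}$ (this step costs a factor of $t$), each of which is zero independently with probability $1-1/t$, and then union-bounds over all $k \times k$ sub-stencils together with all $(k!)^2$ orderings of their rows and columns. The resulting tail is governed by $(1-1/t)^{\binom{k}{2}/t} \approx \exp\bigl(-k^2/(2t^2)\bigr)$, which forces $k = \Theta(t^2 \log n)$. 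Your approach instead goes through Lemma~\ref{lemma:biclique-bound} and computes the probability that a fixed $a \times b$ rectangle is all-zero by factoring over the row-block indices $i$, obtaining $\prod_i (1 - s_i/t)^b \le e^{-ab/t}$. This factorization exploits the partition structure of the $t$ rows in each block in full, rather than discarding all but one representative per $S_{i,j}$, and it also sidesteps the $(k!)^2$ ordering overhead. The payoff is a sharper bound: your argument actually gives $\vrk{H} = O(t \log n)$, a factor of $t$ better than the paper's stated $O(t^2 \log n)$.
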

\begin{proof}
Consider a random $t$-DRGP stencil $H$ as follows: define the set of entries $S_{i,j} \coloneqq \{(i,s),j) \; | \; s \in [t]\}$. For each $i \neq j \in [n]$, set all of the entries $S_{i,i}$ to be $\star$'s, and uniformly sample an entry from $S_{i,j}$ to be a $\star$, while everything else in $S_{i,j}$ is set to  zero.

We will show that $\vrk{H} \le c_t\log{n}$ occurs with high probability, for some constant $c_t > 0$ depending on $t$. Indeed, fix $k \in \N$. Given any square sub-stencil $H_0$ of $H$ of size $k$, we have by Lemma~\ref{lemma:upper-tri} that if $H_0$ is visibly independent, then it must have at least $\binom{k}{2}$ zero entries. Let this set of entries be $Z \subseteq ([n] \times [t]) \times [n]$. Since $Z$ must all be zeros, then $Z \subseteq \cup_{i \neq j}{S_{i,j}}$. Since each $S_{i,j}$ has size $t$, then $Z$ has at least $\binom{k}{2}/t$ entries, each of which belongs to an $S_{i,j}$ that is different than the other. That is, for each $i \neq j \in [n]$ such that $Z \cap S_{i,j} \neq \varnothing$, pick an arbitrary entry $e \in Z \cap S_{i,j}$, and let $T$ be those set of entries. Then we know that $\abs{T} \ge \binom{k}{2}/t$. Moreover, the events that the entries in $T$ are zero are all independent, with each having a chance of at most $1 - 1/t$ of being zero. Thus the chance that $Z$ is all-zeros is at most $(1 - 1/t)^{\binom{k}{2}/t}$. By a Union Bound over all possible such $Z$'s, which is enumerated over all $(k!)^2$ different permutations of the rows and columns, we deduce that
\begin{equation*}
    \Pr{\text{$H_0$ is visibly independent}} \le (k!)^2\left(1 - \frac{1}{t}\right)^{\frac{\binom{k}{2}}{t}} \le \left(k^2\left(1 - \frac{1}{t}\right)^{\frac{k-1}{2t}}\right)^k \ .
\end{equation*}
And by applying another Union Bound over all such $H_0$, we find that
\begin{equation*}
    \Pr{\vrk{H} \ge k} \le \binom{tn}{k}\binom{n}{k}\left(k^2\left(1 - \frac{1}{t}\right)^{\frac{k-1}{2t}}\right)^k \le \left(k^2tn^2\left(1 - \frac{1}{t}\right)^{\frac{k-1}{2t}}\right)^k \ .
\end{equation*}
Picking $k = 6t^2\ln{n} + 1$ makes the right hand side less than $1$ for large enough $n$.
\end{proof}

\section{Tensor Products}
\label{tensor}

In this section, we introduce a tensor product operation on stencils and then proceed to explore its properties. Given the natural tensor product $A \otimes B$ for matrices $A$ and $B$, notice that the support of $A \otimes B$ is determined completely by the support of the matrices $A$ and $B$. As a consequence of this observation, we are able to define the stencil of $A \otimes B$ based solely on the stencils of $A$ and $B$. This leads us to our definition of a tensor product between stencils.

\begin{defn}[Tensor product]
\label{def:tensor-prod}
Given an $A_1 \times B_1$ stencil $H_1$ and an $A_2 \times B_2$ stencil $H_2$, let $H_1 \otimes H_2$ be a $(A_1 \times A_2) \times (B_1 \times B_2)$ stencil defined by
\begin{equation*}
    (H_1 \otimes H_2)[(a_1, a_2), (b_1, b_2)] = 
    \begin{cases}
        \star \quad &\text{if $H_1[a_1, b_1]$ and $H_2[a_2, b_2]$  both equal $\star$,} \\
        0 \quad &\text{if at least one of $H_1[a_1, b_1]$ and $H_2[a_2, b_2]$ equals $0$.}
    \end{cases}
\end{equation*}
\end{defn}

We remark that our tensor product follows similar properties as the natural tensor product for matrices, such as associativity and non-commutativity.

\subsection{Algebraic witnesses of tensor products}

In this subsection, we prove that the tensor product of any algebraic witnesses of the stencils $H_1$ and $H_2$ is an algebraic witness of $H_1 \otimes H_2$. This therefore shows us that the $\F$-rank is a \emph{sub-multiplicative} function with respect to the tensor product.

\begin{prop}
\label{prop:aw-tensor}
Let $M$ and $N$ be matrices over a field $\F$ who are $\F$-witnesses to stencils $H_1, H_2$, respectively. Then $M \tensor N$ is an $\F$-witness of $H_1 \tensor H_2$.
\end{prop}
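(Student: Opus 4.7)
The plan is to unwind the definitions and observe that the claim reduces to the absence of zero divisors in a field. Fix indices $(a_1,a_2)$ and $(b_1,b_2)$. By the standard definition of the Kronecker/tensor product of matrices, we have the entrywise identity
\[
(M \otimes N)[(a_1,a_2),(b_1,b_2)] \;=\; M[a_1,b_1]\cdot N[a_2,b_2],
\]
so the task is to check that this product is nonzero in $\F$ if and only if $(H_1\otimes H_2)[(a_1,a_2),(b_1,b_2)]=\star$.

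For the forward direction, suppose $M[a_1,b_1]\cdot N[a_2,b_2]\neq 0$. Since $\F$ is a field (hence an integral domain), both factors must be nonzero. Because $M$ is an $\F$-witness of $H_1$, the nonzeroness of $M[a_1,b_1]$ forces $H_1[a_1,b_1]=\star$, and similarly $H_2[a_2,b_2]=\star$. By Definition~\ref{def:tensor-prod} this gives $(H_1\otimes H_2)[(a_1,a_2),(b_1,b_2)]=\star$. For the reverse direction, if $(H_1\otimes H_2)[(a_1,a_2),(b_1,b_2)]=\star$, then the same definition guarantees that both $H_1[a_1,b_1]=\star$ and $H_2[a_2,b_2]=\star$, so both $M[a_1,b_1]$ and $N[a_2,b_2]$ are nonzero by the witness property, and their product is nonzero in $\F$.

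Combining the two directions gives the required equivalence for every index pair, which is exactly the condition in Definition~\ref{def:alg-wit} for $M\otimes N$ to be an $\F$-witness of $H_1\otimes H_2$. There is no real obstacle here: the proof is a direct unpacking of definitions, with the only substantive ingredient being that $\F$ has no zero divisors, which is what cleanly synchronizes the ``$\star$'' pattern of the combinatorial tensor product with the support pattern of the algebraic tensor product.
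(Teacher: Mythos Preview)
Your proof is correct and follows essentially the same approach as the paper: both unwind the definitions, use the entrywise identity $(M\otimes N)[(a_1,a_2),(b_1,b_2)]=M[a_1,b_1]\cdot N[a_2,b_2]$, and invoke the fact that a field has no zero divisors to match the support of $M\otimes N$ with the $\star$-pattern of $H_1\otimes H_2$.
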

\begin{proof}
For every entry in $H_1 \tensor H_2$, we know that $(H_1 \tensor H_2)[(i_1,i_2),(j_1,j_2)]$ is a $\star$ if and only if $H_1[i_1,j_1]$ and $H_2[i_2,j_2]$ are both $\star$'s. This holds if and only if $M_{i_1j_1}$ and $N_{i_2j_2}$ are both nonzero. Because $(M \tensor N)_{(i_1,i_2),(j_1,j_2)} = M_{i_1j_1}N_{i_2j_2}$, then the entry $(M \tensor N){(i_1,i_2),(j_1,j_2)} $ is nonzero if and only if $M_{i_1j_1}$ and $N_{i_2j_2}$ are both nonzero. Thus $M \tensor N$ is an $\F$-witness of $H_1 \tensor H_2$.
\end{proof}

By applying Proposition~\ref{prop:aw-tensor} on the $\F$-witnesses of $H_1$ and $H_2$ with the smallest ranks, we deduce the following corollary.

\begin{corollary}
\label{cor:ark-sub-mult}
For any field $\F$, we have the inequality $\rk{\F}{H_1}\rk{\F}{H_2} \ge \rk{\F}{H_1 \tensor H_2}$
\end{corollary}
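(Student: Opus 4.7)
My plan is to promote Proposition~\ref{prop:aw-tensor} to the setting of rank-minimizing algebraic witnesses. First, by Definition~\ref{def:alg-rank}, since $\rk{\F}{H_i}$ is defined as the minimum rank over all $\F$-witnesses of $H_i$, I can pick matrices $M \in \F^{A_1 \times B_1}$ and $N \in \F^{A_2 \times B_2}$ realizing this minimum: $M$ is an $\F$-witness of $H_1$ with $\rank(M) = \rk{\F}{H_1}$, and $N$ is an $\F$-witness of $H_2$ with $\rank(N) = \rk{\F}{H_2}$.

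Next, Proposition~\ref{prop:aw-tensor} guarantees that the Kronecker product $M \otimes N$ is an $\F$-witness of the stencil tensor product $H_1 \otimes H_2$. Since $\rk{\F}{H_1 \otimes H_2}$ is the minimum rank attained by any $\F$-witness, this immediately yields
\[
\rk{\F}{H_1 \otimes H_2} \;\le\; \rank(M \otimes N).
\]
To conclude, I would invoke the standard multiplicativity identity $\rank(M \otimes N) = \rank(M) \cdot \rank(N)$ for Kronecker products over a field (most cleanly proved by taking full-rank factorizations $M = U_1 V_1$ and $N = U_2 V_2$ and writing $M \otimes N = (U_1 \otimes U_2)(V_1 \otimes V_2)$). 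Substituting the chosen rank values gives
\[
\rk{\F}{H_1 \otimes H_2} \;\le\; \rank(M) \cdot \rank(N) \;=\; \rk{\F}{H_1} \cdot \rk{\F}{H_2},
\]
which is the inequality claimed by the corollary.

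There is essentially no obstacle here: every ingredient is either formally in place (Proposition~\ref{prop:aw-tensor} and Definition~\ref{def:alg-rank}) or a textbook property of Kronecker products of matrices. The substantive content of the corollary — that the combinatorial tensor operation on stencils is compatible with the algebraic Kronecker product on their witnesses — has already been isolated in Proposition~\ref{prop:aw-tensor}, so the corollary is genuinely a one-line deduction once that proposition is in hand.
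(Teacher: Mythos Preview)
Your proof is correct and follows exactly the same approach as the paper: pick rank-minimizing $\F$-witnesses $M$ and $N$, apply Proposition~\ref{prop:aw-tensor} to get that $M \otimes N$ is an $\F$-witness of $H_1 \otimes H_2$, and use multiplicativity of rank under Kronecker product. The paper states this as a one-line deduction, and your version simply spells out the details.
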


\subsection{Visible rank and tensor products}

In this subsection, we show that the tensor product of two visibly full rank stencils is also visibly full rank. This therefore shows that the visible rank is a \emph{super-multiplicative} function with respect to the tensor product. We also show an upper bound on the visible rank of the tensor product with respect to the visible rank of one of the stencils.

\begin{prop}
\label{prop:vi-tensor}
Given visibly independent matrices $A$ and $B$ of size $n$, their tensor $A \tensor B$ is also visibly independent.
\end{prop}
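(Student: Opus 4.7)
The plan is to reduce to the case where $A$ and $B$ are upper triangular via Lemma~\ref{lemma:upper-tri}, and then show directly that the tensor product is permutationally equivalent to an upper triangular stencil under the lexicographic order on pairs of indices.

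First, I would apply Lemma~\ref{lemma:upper-tri} to both $A$ and $B$: there exist permutations $\pi_A, \sigma_A$ and $\pi_B, \sigma_B$ such that $A' := A_{\pi_A, \sigma_A}$ and $B' := B_{\pi_B, \sigma_B}$ are upper triangular stencils with $\star$'s on the main diagonal. I would then observe that applying these row/column permutations to each tensor factor induces a row/column permutation of $A \otimes B$ (acting componentwise on the pair-indices). Hence it suffices to prove that $A' \otimes B'$ is visibly full rank.

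Next, I would order the row-indices (and similarly the column-indices) of $A' \otimes B'$ in lexicographic order, so that $(i_1,i_2)$ precedes $(j_1,j_2)$ iff $i_1 < j_1$, or $i_1=j_1$ and $i_2 < j_2$. The key observation is that if $(i_1,i_2)$ lies strictly below $(j_1,j_2)$ in this order, then either $i_1 > j_1$, in which case $A'[i_1,j_1] = 0$, or $i_1 = j_1$ and $i_2 > j_2$, in which case $B'[i_2,j_2] = 0$; in either case the tensor entry $(A' \otimes B')[(i_1,i_2),(j_1,j_2)] = 0$ by Definition~\ref{def:tensor-prod}. On the main diagonal, $(i_1,i_2) = (j_1,j_2)$, both factors $A'[i_1,i_1]$ and $B'[i_2,i_2]$ are $\star$, so the tensor entry is $\star$. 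Thus $A' \otimes B'$, under the lexicographic ordering of rows and columns, is an upper triangular stencil with $\star$'s on the diagonal.

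Finally, I would note that any upper triangular stencil with $\star$'s on the diagonal is visibly full rank: for any permutation $\tau \neq \text{id}$ there is some index $i$ with $\tau(i) < i$, so the entry at position $(i, \tau(i))$ is $0$, meaning the identity permutation is the unique star diagonal. Combining this with the reduction above shows $A \otimes B$ is visibly independent. There is no real obstacle here — the proof is essentially a bookkeeping argument once one recognizes that the lexicographic order on pair-indices is the right way to see upper-triangularity of the tensor product; the mild subtlety is merely to verify that applying the four permutations $(\pi_A,\sigma_A,\pi_B,\sigma_B)$ to the factors induces an honest permutation of rows and columns of the tensor, which follows directly from Definition~\ref{def:tensor-prod}.
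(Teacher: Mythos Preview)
Your proposal is correct and follows essentially the same approach as the paper: reduce via Lemma~\ref{lemma:upper-tri} to upper triangular $A'$ and $B'$, then verify that $A' \otimes B'$ is upper triangular under the lexicographic ordering on pair-indices. The only difference is cosmetic---you spell out the case split $i_1 > j_1$ versus $i_1 = j_1, i_2 > j_2$ and explicitly justify why an upper triangular stencil with $\star$ diagonal has a unique star diagonal, while the paper leaves the latter implicit.
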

\begin{proof}
By Lemma~\ref{lemma:upper-tri}, we know that there are permutations $\pi_A, \sigma_A, \pi_B, \sigma_B$ on $[n]$ such that the stencils $A_0 = (A)_{\pi_A, \sigma_A}$ and $B_0 = (B)_{\pi_B, \sigma_B}$ are both upper triangular stencils. Moreover, we see that $A_0 \tensor B_0 = (A \tensor B)_{(\pi_A,\pi_B), (\sigma_A,\sigma_B)}$. Therefore, it suffices for us to show that $A_0 \tensor B_0$ is an upper triangular stencil.

Consider the lexicographical ordering on $[n] \times [n]$. When $(i_1, i_2) > (j_1, j_2)$, we know that one of the inequalities $i_1 > j_1$ and $i_2 > j_2$ must hold, which means one of $A_0[i_1, j_1]$ or $B_0[i_2, j_2]$ must be a zero entry. This proves that $(A_0 \tensor B_0)[(i_1,i_2),(j_1,j_2)] = 0$ whenever $(i_1, i_2) > (j_1, j_2)$. As for when $(i_1, i_2) = (j_1, j_2)$, we immediately know that $(A_0 \tensor B_0)[(i_1,i_2),(i_1,i_2)] = \star$ as $A_0[i_1, i_1] = \star$ and $B_0[i_2, i_2] = \star$. Hence $A_0 \tensor B_0$ is an upper triangular stencil with respect to the lexicographical ordering.
\end{proof}

Given stencils $H_1$ and $H_2$, we know by Proposition~\ref{prop:vi-tensor} that the tensor product of any of their visibly full rank sub-stencils will also be visibly full rank in $H_1 \tensor H_2$. This yields the following corollary.

\begin{corollary}
\label{cor:vrk-super-mult}
For stencils $H_1$ and $H_2$, We have the inequality $\vrk{H_1 \tensor H_2} \ge \vrk{H_1}\vrk{H_2}$.
\end{corollary}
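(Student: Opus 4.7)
\textbf{Proof proposal for Corollary~\ref{cor:vrk-super-mult}.} The plan is to realize a visibly full rank sub-stencil of $H_1 \otimes H_2$ of size $\vrk{H_1} \cdot \vrk{H_2}$ by tensoring together the maximal visibly full rank sub-stencils of $H_1$ and $H_2$, and then invoking Proposition~\ref{prop:vi-tensor}. Concretely, set $r_1 = \vrk{H_1}$ and $r_2 = \vrk{H_2}$, and let $A$ be an $r_1 \times r_1$ sub-stencil of $H_1$ and $B$ an $r_2 \times r_2$ sub-stencil of $H_2$ that are visibly full rank. Say $A$ sits in the row set $R_1$ and column set $C_1$ of $H_1$, and $B$ sits in the row set $R_2$ and column set $C_2$ of $H_2$.

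The first key observation is that $A \otimes B$ is literally a sub-stencil of $H_1 \otimes H_2$: indexing rows of $H_1 \otimes H_2$ by pairs from $R_1 \times R_2$ and columns by pairs from $C_1 \times C_2$ via Definition~\ref{def:tensor-prod} gives exactly the entries of $A \otimes B$, since the $\star$/$0$ pattern at $((a_1,a_2),(b_1,b_2))$ only depends on whether $H_1[a_1,b_1]$ and $H_2[a_2,b_2]$ are both $\star$. Thus, from the definition of visible rank, it suffices to check that the square stencil $A \otimes B$ of side length $r_1 r_2$ is visibly full rank.

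This is precisely where Proposition~\ref{prop:vi-tensor} is applied. Although the proposition is stated for two visibly independent matrices of the same size, its proof (permute $A$ and $B$ to upper triangular form via Lemma~\ref{lemma:upper-tri} and then use the lexicographic ordering on index pairs) carries over verbatim to the case of stencils of different sizes $r_1$ and $r_2$: for any two pairs $(i_1,i_2) > (j_1,j_2)$ lexicographically, at least one of the coordinate inequalities $i_1 > j_1$ or $i_2 > j_2$ forces a zero in $A$ or in $B$ respectively, and the diagonal pairs $(i_1,i_2) = (j_1,j_2)$ are $\star$'s. Hence $A \otimes B$ is permutationally equivalent to an upper triangular stencil and is visibly full rank, giving $\vrk{H_1 \otimes H_2} \ge r_1 r_2 = \vrk{H_1}\vrk{H_2}$.

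There is no real obstacle here; the only subtlety to watch is the indexing bookkeeping that identifies $A \otimes B$ as a sub-stencil of $H_1 \otimes H_2$, and the minor generalization of Proposition~\ref{prop:vi-tensor} to stencils of different sizes, which comes for free from its proof.
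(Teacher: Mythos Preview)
Your proposal is correct and follows the same approach as the paper: take maximal visibly full rank sub-stencils $A$ and $B$ of $H_1$ and $H_2$, note that $A \otimes B$ embeds as a sub-stencil of $H_1 \otimes H_2$, and apply Proposition~\ref{prop:vi-tensor}. Your observation that Proposition~\ref{prop:vi-tensor} is nominally stated for equal-size square stencils but that its proof works verbatim for sizes $r_1 \neq r_2$ is a valid point the paper silently assumes.
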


Lastly, we end this subsection with an upper bound on the visible rank of $H_1 \tensor H_2$.

\begin{prop}
\label{prop:vrk-tensor-ub}
For stencils $H_1$ and $H_2$ of sizes $m_1 \times n_1$ and $m_2 \times n_2$, respectively, we have the inequality $\vrk{H_1 \tensor H_2} \le \vrk{H_1}n_2$.
\end{prop}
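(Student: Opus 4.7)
My plan is to take an $r \times r$ visibly full rank sub-stencil $H_0$ of $H_1 \otimes H_2$ (with $r = \vrk{H_1 \otimes H_2}$), partition its columns by their \emph{second} coordinate, and project each part down to a visibly full rank sub-stencil of $H_1$. Let $C \subseteq [n_1] \times [n_2]$ denote the columns of $H_0$, and for each $b_2 \in [n_2]$ set $C^{b_2} = \{(b_1, b_2) \in C\}$, so that $C = \bigsqcup_{b_2 \in [n_2]} C^{b_2}$ and $r = \sum_{b_2} |C^{b_2}|$. The entire proof reduces to the claim that $|C^{b_2}| \le \vrk{H_1}$ for every $b_2$; summing over $b_2$ then gives $r \le \vrk{H_1} \cdot n_2$.

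To establish the claim, I first invoke Lemma~\ref{lemma:upper-tri} to put $H_0$ in upper triangular form. A routine observation is that for any subset $J$ of the column indices, restricting $H_0$ to the rows and columns indexed by $J$ produces a smaller upper triangular, hence visibly full rank, sub-stencil. Applying this with $J^{b_2}$ equal to the indices $i$ of columns $c_i \in C^{b_2}$ and labelling the resulting rows as $(a_1^{(j)}, a_2^{(j)})$ and columns as $(b_1^{(j)}, b_2)$ for $j = 1, \ldots, s$ (where $s = |C^{b_2}|$), the tensor entries expand as follows: each diagonal star gives $H_1[a_1^{(j)}, b_1^{(j)}] = \star$ and $H_2[a_2^{(j)}, b_2] = \star$, while each below-diagonal zero $H_1[a_1^{(i)}, b_1^{(j)}] \cdot H_2[a_2^{(i)}, b_2] = 0$ for $i > j$ combines with the diagonal fact $H_2[a_2^{(i)}, b_2] = \star$ to force $H_1[a_1^{(i)}, b_1^{(j)}] = 0$. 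A short argument by contradiction then shows the row first-coordinates $a_1^{(1)}, \ldots, a_1^{(s)}$ are distinct (else $a_1^{(i)} = a_1^{(j)}$ with $i > j$ would make $H_1[a_1^{(i)}, b_1^{(j)}]$ simultaneously equal to $\star$ and $0$), so the $s \times s$ sub-stencil of $H_1$ on rows $\{a_1^{(j)}\}$ and columns $\{b_1^{(j)}\}$ is upper triangular with a star diagonal, giving $\vrk{H_1} \ge s$.

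The main obstacle, and the step that requires genuine thought, is choosing \emph{which} coordinate to group the columns by. A first attempt---partitioning by the first coordinate and trying to show that the first-coordinate projection of $C$ has size at most $\vrk{H_1}$---fails in general: taking $H_1$ to be all stars and $H_2$ to be the $2 \times 2$ identity pattern gives a visibly independent column set in $H_1 \otimes H_2$ whose first-coordinate projection has size $2$, even though $\vrk{H_1} = 1$. The key insight is that fixing the second coordinate $b_2$ lets the scalar $H_2[a_2, b_2]$ factor uniformly out of every column in the block, and the diagonal star condition ensures those scalars are nonzero, so the zero/star pattern of the block descends cleanly to $H_1$.
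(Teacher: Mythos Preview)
Your proof is correct and follows essentially the same approach as the paper's: partition the columns of a maximal visibly full rank sub-stencil by their second coordinate, use Lemma~\ref{lemma:upper-tri} to pass to upper-triangular form, and exploit that the diagonal forces $H_2[a_2^{(i)}, b_2] = \star$ so that all below-diagonal zeros in a fixed-$b_2$ block must come from the $H_1$ factor. The only cosmetic difference is that the paper picks the single largest block and applies pigeonhole ($s_{\max} \ge r/n_2$), whereas you bound every block and sum; your explicit distinctness argument for the row first-coordinates is also a point the paper leaves implicit.
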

\begin{proof}
Consider a visibly full rank sub-stencil $M$ in $H_1 \otimes H_2$ of size $k \times k$. By Lemma~\ref{lemma:upper-tri}, we can find permutations $\pi$ and $\sigma$ on the rows and columns of $M$, respectively, so that the stencil $M_0 \coloneqq M_{\pi, \sigma}$ is an upper triangular stencil. Let the columns of $M_0$ be indexed from  left to right by $(a_1, b_1), \ldots , (a_k, b_k)$. Let the rows of $M_0$ be indexed from top to bottom by $(c_1, d_1), \ldots , (c_k, d_k)$. For $b \in [n_2]$, define $I_b \coloneqq \{\ell \in [k] \; : \; b_\ell = b\}$. Define $b_{max} \coloneqq \arg\max_{b \in [n_2]}\{\abs{I_b}\}$, and let $s \coloneqq \abs{I_{b_{max}}}$. Then from these definitions, it follows that $s \ge k / n_2$. 

Now, write $I_{b_{max}} = \{i_1 < \ldots < i_s\}$, and consider the $s \times s$ sub-stencil $M_1$ in $M_0$ with columns $(a_{i_1}, b_{i_1}), \ldots , (a_{i_s}, b_{i_s})$ and rows $(c_{i_1}, d_{i_1}), \ldots , (c_{i_s}, d_{i_s})$. Since $M_0$ is an upper triangular stencil, then so is $M_1$. Moreover, since $b_{max} = b_{i_1} = \ldots = b_{i_s}$, then the indices of the columns of $M_1$ are $(a_{i_1}, b_{max}), \ldots , (a_{i_s}, b_{max})$. We claim that the $s \times s$ sub-stencil $H_1'$ in $H_1$ formed by the rows $c_{i_1}, \ldots , c_{i_s}$ and the columns $a_{i_1}, \ldots , a_{i_s}$ is upper triangular, which will yield the inequality $\vrk{H_1} \ge s \ge k/n_2$. Because this holds for every full rank sub-stencil $M$ of $H_1 \tensor H_2$, then we conclude $\vrk{H_1} \ge \vrk{H_1 \otimes H_2}/n_2$. Thus it suffices for us to show that the sub-stencil $H_1'$ of $H_1$ is upper triangular. 

Since $M_1$ is upper triangular, we know that $M_1[(a_i, b), (c_i, d_i)] = \star$, and that implies $H_1[a_i, c_i] = H_2[b, d_i] = \star$. Moreover, for $i > j$, we know that $M_1[(a_j, b), (c_i, d_i)] = 0$, which means that one of the entries $H_1[a_j, c_i]$ or $H_2[b, d_i]$ is a zero, but since $H_2[b, d_i] = \star$, then we deduce that $H_1[a_j, c_i] = 0$. Thus we conclude that the $s \times s$ sub-stencil $H_1'$ in $H_1$ formed by the rows $c_{i_1}, \ldots , c_{i_s}$ and the columns $a_{i_1}, \ldots , a_{i_s}$ is an upper triangular stencil.
\end{proof}

\subsection{Visible rank of the tensor powers}

Naturally, one would be interested in tensoring a stencil $H$ with itself several times and then examine the properties arising from it.

\begin{defn}[Tensor power]
\label{def:tensor-power}
Given an $m \times n$ stencil $H$, the $k$'th tensor of $H$ is the $m^k \times n^k$ stencil $H^{\otimes k}$ defined as 
\begin{equation*}
    H^{\otimes k} \coloneqq \underbrace{H \otimes H \otimes \ldots \otimes H}_{\text{$k$ times}}
\end{equation*}
\end{defn}
By combining all the results from the previous subsections, we obtain the following corollary.
\begin{corollary}
\label{cor:tp-bounds}
For a natural number $k$  and an $m \times n$ stencil $H$, we have the inequality 
\begin{equation*}
\rk{\F}{H} \ge \rk{\F}{H^{\otimes k}}^{1/k} \ge \vrk{H^{\otimes k}}^{1/k} \ge \vrk{H}
\end{equation*}
Moreover, we also have the inequality $\vrk{H^{\otimes k}} \le n^{k-1}\vrk{H}$.
\end{corollary}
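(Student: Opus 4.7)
The plan is to derive Corollary~\ref{cor:tp-bounds} as a straightforward synthesis of the four multiplicativity results established in this section, applied inductively to the tensor powers. Since each of the constituent inequalities has already been proved for two-fold tensor products, the entire corollary reduces to bookkeeping with simple inductions on $k$.

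First I would establish the middle inequality $\rk{\F}{H^{\otimes k}} \ge \vrk{H^{\otimes k}}$, which is an immediate instance of Proposition~\ref{prop:ark-lb} applied to the stencil $H^{\otimes k}$; taking $k$-th roots preserves the inequality and gives the middle link of the chain. For the leftmost inequality $\rk{\F}{H} \ge \rk{\F}{H^{\otimes k}}^{1/k}$, I would induct on $k$ using Corollary~\ref{cor:ark-sub-mult}: the base case $k=1$ is vacuous, and for the inductive step we write $H^{\otimes k} = H \otimes H^{\otimes(k-1)}$ and apply submultiplicativity to get $\rk{\F}{H^{\otimes k}} \le \rk{\F}{H} \cdot \rk{\F}{H^{\otimes(k-1)}} \le \rk{\F}{H}^k$ by the inductive hypothesis. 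Taking $k$-th roots finishes this link.

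For the rightmost inequality $\vrk{H^{\otimes k}}^{1/k} \ge \vrk{H}$, I would do the analogous induction using Corollary~\ref{cor:vrk-super-mult}: writing $H^{\otimes k} = H \otimes H^{\otimes(k-1)}$ and applying supermultiplicativity yields $\vrk{H^{\otimes k}} \ge \vrk{H} \cdot \vrk{H^{\otimes(k-1)}} \ge \vrk{H}^k$ by induction, from which the desired inequality follows after a $k$-th root. Together these three steps yield the full chain.

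Finally, for the upper bound $\vrk{H^{\otimes k}} \le n^{k-1}\vrk{H}$, I would induct on $k$ using Proposition~\ref{prop:vrk-tensor-ub}, which gives $\vrk{H_1 \otimes H_2} \le \vrk{H_1} \cdot n_2$ where $n_2$ is the number of columns of $H_2$. Writing $H^{\otimes k} = H^{\otimes(k-1)} \otimes H$ (where the right factor $H$ has $n$ columns), the proposition gives $\vrk{H^{\otimes k}} \le \vrk{H^{\otimes(k-1)}} \cdot n$, and peeling off one factor of $H$ at a time by iterating this bound $k-1$ times yields the claim. Since every piece is a direct application of an already-proved statement, there is no real obstacle here; the only item worth checking is that the base case $k=1$ is trivially an equality and that each induction telescopes correctly.
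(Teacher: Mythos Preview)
Your proposal is correct and matches the paper's approach exactly: the paper presents this corollary with no explicit proof, stating only that it follows ``by combining all the results from the previous subsections,'' and your write-up simply spells out those inductions on $k$ using Proposition~\ref{prop:ark-lb}, Corollaries~\ref{cor:ark-sub-mult} and~\ref{cor:vrk-super-mult}, and Proposition~\ref{prop:vrk-tensor-ub}.
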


From the previous corollary, notice that by considering the visible rank of the higher tensor powers of $H$, one can attain better lower bounds on the $\F$-rank. As such, it seems natural to consider the best possible bound achieved through this vein.

\begin{defn}[Visible Capacity]
\label{def:stencil}
The visible capacity of a stencil $H$, denoted as $\Upsilon(H)$, is defined as $\Upsilon(H) \coloneqq \sup_k{\vrk{H^{\otimes k}}^{1/k}}$.
\end{defn}

By Corollary~\ref{cor:tp-bounds}, we deduce that $\rk{\F}{H} \ge \Upsilon(H)$ over any field $\F$. It is not known to us if there are stencils for which there is a gap between its visible capacity and all its $\F$-ranks. We leave the discussion of this point to Question~\ref{further:ark-vc-gap} in Section~\ref{further}.

\section{Tensor Powers of Stencils for $2$-DRGP Codes and $q$-LCCs}
\label{locality-tp}

In this section, we consider the tensor product from the previous section and use it to analyze the visible rank of the tensor powers of $2$-DRGP and $q$-LCC stencils in hopes of obtaining better lower bounds on the $\F$-rank via Corollary~\ref{cor:tp-bounds}.

\subsection{$2$-DRGP stencils}

In this subsection, we prove that the second tensor power of an arbitrary $2$-DRGP stencil has a large visible rank.

\begin{theorem}
\label{thm:drgp-tensor-lb}
For any $2$-DRGP stencil $H$, we have $\vrk{H \tensor H} \ge n$.
\end{theorem}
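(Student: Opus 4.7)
The plan is to exhibit an explicit $n \times n$ sub-stencil of $H \otimes H$ that is visibly full rank, and in fact to arrange matters so that this sub-stencil is precisely the $n \times n$ identity stencil (stars on the diagonal, zeros elsewhere), which is trivially visibly full rank.

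Concretely, recall that the columns of $H \otimes H$ are indexed by pairs $(j,k) \in [n] \times [n]$ and the rows by pairs $((i_1, s_1), (i_2, s_2)) \in ([n]\times [2])^2$, with $(H \otimes H)[((i_1,s_1),(i_2,s_2)), (j,k)] = \star$ iff $H[(i_1,s_1), j] = \star$ and $H[(i_2,s_2), k] = \star$. For each $i \in [n]$, write $A_i = \{\, j \in [n] : H[(i,1), j] = \star \,\}$ and $B_i = \{\, j \in [n] : H[(i,2), j] = \star \,\}$ for the supports of the two repair rows associated to position $i$. By Definition~\ref{def:drgp-stencil}, both $A_i$ and $B_i$ contain $i$, while for every $j \neq i$ at most one of rows $(i,1)$ and $(i,2)$ has a star in column $j$. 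Consequently $A_i \cap B_i = \{i\}$, which is the single combinatorial fact driving the whole proof.

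I would then select the $n$ columns $C = \{(i,i) : i \in [n]\}$ and the $n$ rows $R = \{((i,1),(i,2)) : i \in [n]\}$, and verify the entries of the resulting $n \times n$ sub-stencil $M$. The entry of $M$ in the row indexed by $i$ and column indexed by $j$ equals $(H \otimes H)[((i,1),(i,2)),(j,j)]$, which is a $\star$ iff $H[(i,1),j]=\star$ and $H[(i,2),j]=\star$, i.e.\ iff $j \in A_i \cap B_i = \{i\}$. Hence $M[i,j] = \star$ when $j=i$ and $M[i,j]=0$ otherwise, so $M$ is the $n \times n$ identity stencil. This has a unique star-diagonal (the identity permutation), so it is visibly full rank, yielding $\vrk{H \otimes H} \geq n$.

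There is no real obstacle here once one spots the right sub-stencil; the potentially tricky part is simply tracking the tensor indexing and confirming that the $2$-DRGP disjointness condition forces the off-diagonal entries of $M$ to vanish. The corollary $\rk{\F}{H} \ge \sqrt{n}$ for any field $\F$ then follows immediately from Corollary~\ref{cor:tp-bounds}, which states $\rk{\F}{H} \ge \vrk{H \otimes H}^{1/2}$.
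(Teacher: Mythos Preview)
Your proof is correct and is essentially identical to the paper's own argument: both select the columns $\{(i,i): i\in[n]\}$ and the rows $\{((i,1),(i,2)): i\in[n]\}$ and observe that the resulting sub-stencil is the $n\times n$ identity stencil because the $2$-DRGP disjointness forces $A_i\cap B_i=\{i\}$. The only difference is that you spell out the sets $A_i,B_i$ explicitly, which if anything makes the argument slightly clearer.
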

\begin{proof}
We cite~\cite{RV16, Woo16} for the proof of this part. We will follow the notations given in~\cite{Woo16} closely. While both proofs show that $\rk{\F}{H} \ge \sqrt{2n} - O(1)$, we will prove that $\rk{\F}{H} \ge \sqrt{n}$ by showing that $\vrk{H \otimes H} \ge n$ and then applying Corollary~\ref{cor:tp-bounds}. We rewrite their proofs in terms of stencils.

Consider the $n \times n$ sub-stencil $D$ in $H \otimes H$ whose columns are $(1,1), \ldots , (n,n)$ and whose rows are $((1, 1), (1,2)), \ldots , ((n,1), (n,2))$. We claim that $D$ has $\star$'s along the diagonal and zero everywhere else, which then implies that it is visibly full rank. Indeed, the entry $D[((i, 1), (i,2)), (j,j)]$ equals $\star$ if and only if both $H[(i,1), j]$ and $H[(i,2), j]$ are $\star$'s. Since $H$ is a 2-DRGP stencil, this happens precisely when $i=j$. Thus $D$ is a diagonal stencil.
\end{proof}

Thus by Corollary~\ref{cor:tp-bounds}, we obtain a lower bound $\rk{\F}{H} \ge \sqrt{n}$ for any field $\F$. On the other hand, the best known lower bounds yield $\rk{\F}{H} \ge \sqrt{2n} - O(1)$, and so one might be interested in achieving this lower bound through the viewpoint of tensor products. In order to improve the lower bound that we have, we first have to translate our proof into linear-algebraic terms.\smallskip

Given a field $\F$, suppose that we have an $\F$-witness $A$ of the $2$-DRGP stencil $H$ whose rank is $r$. Decompose $A = MN$ where $M$ is a $2n \times r$ matrix and $N$ is a $r \times n$ matrix. Denote the $i$'th column of $N$ by $w_i$. Then the proof of Theorem~\ref{thm:drgp-tensor-lb} is equivalent to saying that the tensors $\{w_i \tensor w_i\}_{i=1}^n$ are linearly independent. Since they live in the vector space $\F^r \otimes \F^r$,  we deduce the inequality $r^2 \ge n$, which gives us the same lower bound as we obtained in Theorem~\ref{thm:drgp-tensor-lb}. Now, if one is more careful about the vector space, one can notice that the tensors $\{w_i \tensor w_i\}_{i=1}^n$ are in fact contained in the space of \emph{symmetric} tensors, which has a dimension of $\binom{r+1}{2}$. Thus we obtain the inequality $\binom{r+1}{2} \ge n$, which gives us $r \ge \sqrt{2n} - O(1)$.

\subsection{$q$-LCC stencils}

In this subsection, we show that the visible rank of the $k$'th tensor power of a $q$-LCC stencil would not improve the current bound of $n - \widetilde{\Omega}(n^{(q-2)/(q-1)})$ in Theorem~\ref{thm:lcc-construction} for $k \le \mathrm{polylog}(n)$. More generally, we show that the visible rank of small tensor powers are not significantly bigger than the visible rank for the regime of high-rate stencils.

\begin{prop}
\label{prop:high-vrk-tp}
Let $H$ be an $m \times n$ stencil whose visible rank is at most $n - s$. For any fixed natural number $k$, we have $\vrk{H^{\otimes k}}^{1/k} \le n - \frac{s}{k}$.
\end{prop}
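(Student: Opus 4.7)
The plan is to combine the coarse upper bound from Corollary~\ref{cor:tp-bounds} with Bernoulli's inequality. Corollary~\ref{cor:tp-bounds} already tells us that $\vrk{H^{\otimes k}} \le n^{k-1}\vrk{H}$, so together with the hypothesis $\vrk{H} \le n - s$ we immediately obtain
\[ \vrk{H^{\otimes k}} \;\le\; n^{k-1}(n-s) \;=\; n^k \left(1 - \frac{s}{n}\right). \]

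Taking the $k$-th root on both sides gives $\vrk{H^{\otimes k}}^{1/k} \le n \left(1 - s/n\right)^{1/k}$. The next step will be to apply Bernoulli's inequality, which states that $(1+x)^r \le 1 + rx$ whenever $r \in (0,1]$ and $x \ge -1$. Setting $r = 1/k$ and $x = -s/n$ (noting that $s \le n$, since $\vrk{H} \ge 0$ forces $s \le n - \vrk{H} \le n$), this yields $(1 - s/n)^{1/k} \le 1 - s/(nk)$. Substituting this into the previous display gives the desired conclusion $\vrk{H^{\otimes k}}^{1/k} \le n - s/k$.

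There is no real obstacle here: the entire content is the observation that a deficit of $s$ in $\vrk{H}$ can be transferred into a proportional deficit of $s/k$ in $\vrk{H^{\otimes k}}^{1/k}$ by pairing the already-established tensor-power bound with Bernoulli's inequality. In particular, the proof is oblivious to the fine combinatorial structure of $H$, which matches the intuition behind the statement: small gains in the visible rank of $H$ spread out only modestly and multiplicatively across high tensor powers, explaining why, as asserted in Theorem~\ref{thm:q-lcc-tensor-intro}, taking tensor powers of a $q$-LCC stencil of visible rank $n - \widetilde{\Omega}(n^{(q-2)/(q-1)})$ cannot yield a polynomial improvement to the dimension upper bound.
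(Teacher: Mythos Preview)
Your proof is correct and essentially identical to the paper's: both use the bound $\vrk{H^{\otimes k}} \le n^{k-1}\vrk{H}$ from Corollary~\ref{cor:tp-bounds} together with the inequality $(1-x)^{1/k} \le 1 - x/k$ (Bernoulli's inequality) applied with $x = s/n$. The paper compresses this into a single chain of inequalities, while you additionally justify $s \le n$ to check the hypothesis of Bernoulli, but the argument is the same.
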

\begin{proof}
By Corollary~\ref{cor:tp-bounds} and the inequality $(1-x)^{1/k} \le 1 - \tfrac{x}{k}$ for $x \ge 0$, we have that
\begin{equation*}
    \vrk{H^{\otimes k}}^{1/k} \le \left(n^{k-1}\vrk{H}\right)^{1/k} \le \left(n^{k-1}(n-s)\right)^{1/k} = n\left(1- \frac{s}{n}\right)^{1/k} \le n\left(1- \frac{s}{kn}\right) = n - \frac{s}{k} \ . \qedhere
\end{equation*}
\end{proof}

From Proposition~\ref{prop:high-vrk-tp}, notice that the visible rank of a $n^{o(1)}$ tensor power of a $q$-LCC stencil would not improve the current bounds on $q$-LCCs by any polynomial factor. More formally, we deduce the following corollary.

\begin{corollary}
\label{thm:lcc-tensor-lb}
Let $H$ be a $q$-LCC stencil whose visible rank is at most $n - \widetilde{\Omega}(n^{(q-2)/(q-1)})$. For any natural number $k$, we have $\vrk{H^{\otimes k}}^{1/k} \le n - \widetilde{\Omega}(n^{(q-2)/(q-1)})/k$.
\end{corollary}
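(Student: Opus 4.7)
The plan is to observe that Corollary~\ref{thm:lcc-tensor-lb} is essentially an immediate specialization of Proposition~\ref{prop:high-vrk-tp}. The proposition says that whenever an $m \times n$ stencil $H$ has $\vrk{H} \le n - s$, then $\vrk{H^{\otimes k}}^{1/k} \le n - s/k$ for every $k \in \N$. Since the hypothesis of the corollary gives us a $q$-LCC stencil with $\vrk{H} \le n - s$ for $s = \widetilde{\Omega}(n^{(q-2)/(q-1)})$, substituting this value of $s$ into the conclusion of the proposition directly yields the stated inequality $\vrk{H^{\otimes k}}^{1/k} \le n - \widetilde{\Omega}(n^{(q-2)/(q-1)})/k$.

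So the only step I need to carry out is an appeal to Proposition~\ref{prop:high-vrk-tp} with the specified $s$. No additional computation or estimation is required, because the polylog factors absorbed into the $\widetilde{\Omega}(\cdot)$ notation pass through the division by the constant $k$ (or, at worst, the fixed function $k$) without changing the order of magnitude in the $\widetilde{\Omega}$ sense.

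There is no real obstacle here: the work was already done in establishing the two ingredient results, namely (i) Theorem~\ref{thm:lcc-construction}, which guarantees the existence of a $q$-LCC stencil matching the hypothesis, and (ii) Proposition~\ref{prop:high-vrk-tp}, which bounds the visible rank of tensor powers in terms of the visible rank of the base stencil via Corollary~\ref{cor:tp-bounds} and the elementary inequality $(1-x)^{1/k} \le 1 - x/k$. The corollary is simply the composition of these two facts, and its purpose is to articulate the consequence that even tensoring up to $k = n^{o(1)}$ times fails to polynomially improve the $\widetilde{O}(n^{(q-2)/(q-1)})$ upper bound on $q$-LCC dimension obtained from visible rank.
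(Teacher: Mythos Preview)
Your proposal is correct and matches the paper's approach exactly: the corollary is stated immediately after Proposition~\ref{prop:high-vrk-tp} as a direct specialization with $s = \widetilde{\Omega}(n^{(q-2)/(q-1)})$, and the paper offers no further argument beyond that observation.
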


\section{Exponential Gaps between Tensor Powers}
\label{locality-tp}

In this section, we prove Theorem~\ref{thm:tensor-gaps}. Such exponential gaps have been shown in~\cite{AL06} for the Shannon capacities of graphs, and by following the same proof method given there, we are able to attain the same exponential gaps for the visible ranks of the tensor powers.\medskip

Before we prove the theorem, we will first define some notations that will help clarify the forthcoming analysis. For any vector $v = (v^{(1)}, \ldots , v^{(k)}) \in X^k$, let $S(v) = \{v^{(1)}, \ldots , v^{(k)}\}$ be the set of values of the coordinates of $v$. For $a \in [k]$, let $p_a(v) \in X^{k-1}$ be the vector $p_a(v) = (v^{(1)}, \ldots , v^{(a-1)}, v^{(a+1)}, \ldots , v^{(k)})$. Define the \emph{distinct rank} of $H^{\otimes k}$, denoted $\text{drk}(H)$, to be the largest visibly full rank substencil $H_d$ in $H^{\otimes k}$ such that $S(r_1) \cap S(r_2) = \varnothing$ for any distinct rows $r_1$ and $r_2$ in $H_d$ and $S(c_1) \cap S(c_2) = \varnothing$ for any distinct columns $c_1$ and $c_2$ in $H_d$. Moreover, we call such a substencil $H_d$ as \emph{distinctly full rank}. We first prove the following lemma, which will help reduce the analysis of Theorem~\ref{thm:tensor-gaps-2} to only look at the distinct visible ranks instead of the visible ranks.

\begin{lemma}
\label{lemma:distinct-vrk}
For any stencil $H$ and natural number $k \ge 2$, we have $\vrk{H^{\otimes k}} \le 2k^2\vrk{H^{\otimes (k-1)}}\text{drk}(H^{\otimes k})$.
\end{lemma}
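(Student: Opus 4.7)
The plan is to take any visibly full rank substencil $M$ of $H^{\otimes k}$ of size $N = \vrk{H^{\otimes k}}$ and to extract a distinctly full rank substencil of $H^{\otimes k}$ of size at least $N/(2k^2\vrk{H^{\otimes (k-1)}})$. By Lemma~\ref{lemma:upper-tri}, I may assume $M$ is upper triangular, with rows $r_1, \ldots, r_N \in [m]^k$ and columns $c_1, \ldots, c_N \in [n]^k$, and $M[i,i] = \star$ for all $i$. For each coordinate $a \in [k]$ and each value $v$, write $n^{\text{row}}_{a,v} := |\{i : r_i^{(a)} = v\}|$, and define $n^{\text{col}}_{a,v}$ analogously.

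The pivotal structural claim is that $n^{\text{row}}_{a,v} \le \vrk{H^{\otimes (k-1)}}$ for every $(a,v)$, and similarly for columns. To prove it for rows, fix $(a,v)$ and set $I := \{i : r_i^{(a)} = v\}$. The principal submatrix $M[I,I]$ is upper triangular, hence visibly full rank. Moreover, for each $j \in I$ the requirement $M[j,j] = \star$ forces $H[v, c_j^{(a)}] = \star$, so the $a$-th coordinate factor $H[v, c_j^{(a)}]$ of every entry $M[i,j]$ with $i,j \in I$ is already $\star$. Hence $M[I,I]$ matches, entry-for-entry, the substencil of $H^{\otimes (k-1)}$ indexed by the projections $\tilde{r}_i, \tilde{c}_j$ obtained by deleting the $a$-th coordinate of $r_i, c_j$. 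Distinctness of the projected rows is immediate (distinct $r_i, r_{i'}$ agreeing at coordinate $a$ must differ elsewhere), while distinctness of the projected columns follows from the visible full rank of $M[I,I]$, since duplicate columns would produce a second star diagonal. Thus $M[I,I]$ yields a legitimate visibly full rank substencil of $H^{\otimes (k-1)}$ of size $|I|$, giving $|I| \le \vrk{H^{\otimes (k-1)}}$.

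With this concentration bound in hand, I would build a distinctly full rank substencil by a greedy independent-set argument. Define the row-conflict graph $G_r$ on $[N]$ with $\{i,j\}$ an edge iff $S(r_i) \cap S(r_j) \ne \varnothing$, and analogously $G_c$. For any fixed $i$, since $|S(r_i)| \le k$ and each $v \in S(r_i)$ lies in $S(r_j)$ for at most $\sum_a n^{\text{row}}_{a,v} \le k\,\vrk{H^{\otimes (k-1)}}$ indices $j$, the degree of $i$ in $G_r$ is at most $k^2 \vrk{H^{\otimes (k-1)}}$; the same bound holds in $G_c$. Hence the maximum degree of $G_r \cup G_c$ is at most $2k^2 \vrk{H^{\otimes (k-1)}}$, and a standard greedy pick (with the minor slack coming from excluding $i$ itself in each neighborhood) yields $I^* \subseteq [N]$ with $|I^*| \ge N/(2k^2\vrk{H^{\otimes (k-1)}})$. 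The principal submatrix $M[I^*, I^*]$ is upper triangular, hence visibly full rank, and independence in both conflict graphs makes its rows' and columns' value-sets pairwise disjoint; thus $M[I^*, I^*]$ is distinctly full rank in $H^{\otimes k}$. This certifies $\text{drk}(H^{\otimes k}) \ge |I^*| \ge N/(2k^2\vrk{H^{\otimes (k-1)}})$, which rearranges to the claimed inequality.

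The step requiring the most care is verifying that the projection of $M[I,I]$ genuinely corresponds to a substencil of $H^{\otimes (k-1)}$ — in particular, ruling out duplicate projected columns. This is exactly where the visible full rank hypothesis is crucial, since two identical columns would immediately force a second star diagonal. The remaining pieces — putting $M$ in upper-triangular form, the coordinate-wise union bound on conflicts, and the greedy independent-set extraction — are routine once the projection picture is set up correctly.
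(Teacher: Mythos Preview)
Your proposal is correct and follows essentially the same approach as the paper: both bound, via a projection-to-$H^{\otimes(k-1)}$ argument, the number of diagonal pairs whose row (resp.\ column) has a fixed value in a fixed coordinate by $\vrk{H^{\otimes(k-1)}}$, and then extract a distinctly full rank substencil greedily, the paper by iteratively deleting conflicting pairs and you via the equivalent conflict-graph/independent-set formulation. If anything, your treatment is slightly more careful than the paper's in explicitly verifying that the projected \emph{columns} are distinct (arguing that duplicate columns would contradict visible full rank of $M[I,I]$), a point the paper's proof only checks for rows.
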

\begin{proof}
Suppose we have an $\ell \times \ell$ substencil $H_0$ in $H^{\otimes k}$
whose general diagonal of stars is $S = \{(r_1, c_1), \ldots , (r_\ell, c_\ell)\}$. We are going to select a distinctly full rank $\ell_d \times \ell_d$ substencil $H_d$ from $H_0$ such that $\ell_d \ge \ell / 2k^2\vrk{H^{\otimes (k-1)}}$. Since $H_0$ was an arbitrary visibly full rank substencil, this will show our lemma.

Let's now construct $H_d$. Start with the set $S_d = \varnothing$ and repeat the following process: pick an arbitrary pair $(r_i, c_i)$ from $S$, add $(r_i, c_i)$ to $S_d$, and remove all pairs $(r_j,c_j)$ in $S$ for which we either have $S(r_i) \cap S(r_j) \neq \varnothing$ or $S(c_i) \cap S(c_j) \neq \varnothing$. Once this process ends, we will have $S(r_i) \cap S(r_j) = \varnothing$ and $S(c_i) \cap S(c_j) = \varnothing$ for any distinct pairs $(r_i, c_i)$ and $(r_j, c_j)$ in $S_d$. Thus the substencil $H_d$ formed by the pairs in $S_d$ will be distinctly full rank.

It remains to show that $\abs{S_d} \ge \abs{S} / (2k^2\vrk{H^{\otimes (k-1)}})$, and we show this by proving that every step of the process above only deletes at most $2k^2\vrk{H^{\otimes (k-1)}}$ pairs. Indeed, consider any pair $(r_i, c_i)$. There are at most $k$ values in $S(r_i)$, and each of them has $k$ possible coordinates to be in any other row $r_j$. Thus there are at most $k^2$ possibilities for any value in $S(r_i)$ to be in any coordinate. Now, for any $x \in S(r_i)$ and $a \in [k]$, consider the diagonal $S_{a,x} = \{(r, c) \in S \; | \; r^{(a)} = x\}$. We claim that $\abs{S_{a,x}} \le \vrk{H^{\otimes (k-1)}}$. Once we prove this claim, then it would imply that the constraint $S(r_i) \cap S(r_j) = \varnothing$ will delete at most $k^2\vrk{H^{\otimes (k-1)}}$ pairs. By the same argument for $c_i$, we would also deduce that we delete at most $k^2\vrk{H^{\otimes (k-1)}}$ pairs to obtain the constraint $S(c_i) \cap S(c_j) = \varnothing$, totaling to at most $2k^2\vrk{H^{\otimes (k-1)}}$ deletions. Thus it remains to show that $\abs{S_{a,x}} \le \vrk{H^{\otimes (k-1)}}$.

Indeed, since $H_d$ is a visibly full rank substencil, then so is the substencil $H_d^{a,x}$ formed by the pairs in $S_{a,x}$. Now, because $S_{a,x}$ forms a star diagonal in $H^{\otimes k}$, then we know that $H^{\otimes k}[r,c] = \star$ for every $(r,c) \in S_{a,x}$. From the definition of $H^{\otimes k}$ and $S_{a,x}$, we deduce that $H[r^{(a)}, c^{(a)}] = H[x,c^{(a)}] = \star$ for all $(r,c) \in S_{a,x}$. This then implies that $H[r_1^{(a)},c_2^{(a)}] = \star$ for every row $r_1$ and column $c_2$ appearing in $S_{a,x}$. Since $H_d^{a,x}$ is visibly full rank, then this means that the substencil in $H^{\otimes (k-1)}$ formed by the pairs $\{(p_a(r), p_a(c)) \; | \; (r,c) \in S_{a,x}\}$ is also visibly full rank as the $a$'th value in any product in $H_d^{a,x}$ is always a $\star$.
Moreover, the map $p_a$ is injective over the rows in $S_{a,x}$ as for any distinct rows $r_1$ and $r_2$ appearing in $S_{a,x}$, we know that $r_1^{(a)} = r_2^{(a)} = x$ but $r_i \neq r_j$. Thus the substencil in $H^{\otimes (k-1)}$ formed by the pairs $\{(p_a(r), p_a(c)) \; | \; (r,c) \in S_{a,x}\}$ has the same size as $H_d^{a,x}$. Since it is visibly full rank, then we conclude that $\abs{S_{a,x}} \le \vrk{H^{\otimes (k-1)}}$ for any $a \in [k]$ and $x \in S(r_i)$, and this finishes the proof.
\end{proof}

Now, we proceed to prove our main theorem for this section.

\begin{theorem}
\label{thm:tensor-gaps-2}
For any fixed natural number $t \ge 2$ and sufficiently large $n$, there exists a $tn \times n$ stencil $H$ such that $\vrk{H^{\otimes k}} = O_t((\log{n})^k)$ for any $k = 1, \ldots , t-1$ and $\vrk{H^{\otimes t}} \ge n$.
\end{theorem}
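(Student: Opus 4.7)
The plan is to let $H$ be the random $t$-DRGP stencil used to prove Theorem~\ref{thm:drgp-gap}: for each pair $i \ne j$ in $[n]$, independently sample $s^{*}(i,j) \in [t]$ uniformly, set $H[(i,s^{*}(i,j)),j]=\star$ and $H[(i,s),j]=0$ for $s \ne s^{*}(i,j)$, and set $H[(i,s),i]=\star$ for all $s$. The lower bound $\vrk{H^{\otimes t}} \ge n$ is immediate and deterministic, generalizing Theorem~\ref{thm:drgp-tensor-lb}: take the $n \times n$ sub-stencil of $H^{\otimes t}$ with columns $(j,j,\ldots,j)$ and rows $((i,1),(i,2),\ldots,(i,t))$ for $i, j \in [n]$. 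The entry at this row and column is $\star$ iff $H[(i,s),j]=\star$ for every $s \in [t]$, which by the $t$-DRGP property occurs iff $i=j$; the sub-stencil is therefore a diagonal stencil of size $n$.

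For the upper bounds, iterate Lemma~\ref{lemma:distinct-vrk} to obtain, for every $k \ge 1$,
\[ \vrk{H^{\otimes k}} \le \biggl(\prod_{j=2}^{k} 2j^2\biggr) \vrk{H} \prod_{j=2}^{k} \text{drk}(H^{\otimes j}) . \]
Since $k \le t$, the leading product is $O_t(1)$, so it suffices to show that, with high probability, $\vrk{H} = O_t(\log n)$ and $\text{drk}(H^{\otimes j}) = O_t(\log n)$ for every $j \in \{2,\ldots,t-1\}$. The first bound is exactly Theorem~\ref{thm:drgp-gap}. The remaining bounds come from a union bound over distinctly full rank sub-stencils, mirroring the proof of Theorem~\ref{thm:drgp-gap}.

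Fix $k \in \{2,\ldots,t-1\}$ and a target size $\ell$. Any $\ell \times \ell$ distinctly full rank sub-stencil of $H^{\otimes k}$ is specified by $\ell$ row tuples in $([n]\times[t])^k$ and $\ell$ column tuples in $[n]^k$ with pairwise disjoint value sets within rows and within columns; there are at most $(nt)^{k\ell} \, n^{k\ell}$ such ordered choices. By Lemma~\ref{lemma:upper-tri}, some permutation of each such sub-stencil is upper triangular, producing $\binom{\ell}{2}$ forced zero entries; each such zero $(\mathbf{r}^{(p)}, \mathbf{c}^{(q)})$ admits a witness coordinate $a \in [k]$ with $H[r_a^{(p)}, c_a^{(q)}]=0$. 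The crucial combinatorial observation is that, because of distinctness, each pair $(i,j) \in [n]^2$ appears as the "$H$-base" of at most $kt$ zero-witnesses: column distinctness forces at most one column tuple to contain $j$, and for a fixed coordinate $a$ at most $t$ row tuples can place a pair of the form $(i,\cdot)$ in position $a$ (since the pairs $(i,1),\ldots,(i,t)$ live in pairwise disjoint rows). Consequently, any witness assignment exposes at least $\binom{\ell}{2}/(kt)$ distinct $(i,j)$ bases, and the associated events $H[(i,s),j]=0$ are independent across $(i,j)$ since the choices $s^{*}(i,j)$ are, each of probability $\le 1-1/t$. Union-bounding over all sub-stencils, witness assignments, and row/column orderings gives
\[ \Pr{\text{drk}(H^{\otimes k}) \ge \ell} \le (nt)^{k\ell} \, n^{k\ell} \, (\ell!)^2 \, k^{\binom{\ell}{2}} \, \Bigl(1-\tfrac{1}{t}\Bigr)^{\binom{\ell}{2}/(kt)} , \]
which is $o(1)$ once $\ell = C_t \log n$ for a sufficiently large $C_t = O_t(1)$.

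The principal difficulty lies in the combinatorial claim just described---quantifying precisely how much reuse of $(i,j)$ pairs among witness base-entries is possible under distinctness. Distinctness forbids repetition of $(x,s)$ pairs across row tuples and of column labels across column tuples, but nothing prevents a single row tuple from revisiting the same first coordinate $x$ at multiple positions with different second coordinates, or multiple rows from sharing a first coordinate $x$ via different $(x,s)$ pairs. The plan is to show that all this permitted sharing remains bounded by $O_t(1)$ per $(i,j)$ pair, so that enough genuinely independent random variables $s^{*}(i,j)$ survive to drive the union bound below $1$ at $\ell = O_t(\log n)$, yielding $\vrk{H^{\otimes k}} = O_t((\log n)^k)$ for all $k \le t-1$ simultaneously with $\vrk{H^{\otimes t}} \ge n$.
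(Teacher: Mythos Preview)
Your construction cannot work: the very diagonal argument you give for $\vrk{H^{\otimes t}} \ge n$ already succeeds at level~$2$ for \emph{any} $t$-DRGP stencil. Take the $n\times n$ sub-stencil of $H^{\otimes 2}$ with rows $((i,1),(i,2))$ and columns $(j,j)$; its $(i,j)$ entry is a $\star$ iff both $H[(i,1),j]$ and $H[(i,2),j]$ are $\star$'s, and the $t$-DRGP property guarantees that at most one of $H[(i,1),j],\ldots,H[(i,t),j]$ is a $\star$ when $i\ne j$. Hence this sub-stencil is diagonal and $\vrk{H^{\otimes 2}}\ge n$, flatly contradicting the claimed bound $\vrk{H^{\otimes 2}}=O_t((\log n)^2)$ whenever $t\ge 3$. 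The issue is that in a $t$-DRGP stencil zeros are \emph{abundant} (probability $1-1/t$ per entry), so forcing a zero in $H^{\otimes k}$ is essentially free; correspondingly, in your union bound the factor $k^{\binom{\ell}{2}}$ dominates $(1-1/t)^{\binom{\ell}{2}/(kt)}$ and the displayed quantity is not $o(1)$ for $k\ge 2$.

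The paper uses the complementary random stencil: for each $i\ne j$ exactly \emph{one} of the $t$ entries $H[(i,1),j],\ldots,H[(i,t),j]$ is set to $0$ and the remaining $t-1$ are $\star$'s. Now zeros are rare (probability $1/t$ each), so the diagonal trick requires all $t$ coordinates and only fires at level~$t$, giving $\vrk{H^{\otimes t}}\ge n$. For $k<t$, the event that a fixed entry of $H^{\otimes k}$ is $0$ has probability at most $k/t\le 1-1/t$ directly by a union bound over the $k$ coordinates, so one bounds $\Pr{\text{sub-stencil is upper triangular}}$ without ever union-bounding over witness assignments; distinctness then supplies $\Omega(\ell^2/(kt))$ mutually independent such events, and the union bound over sub-stencils closes at $\ell=O_t(\log n)$.
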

\begin{proof}
We proceed to prove this theorem by showing that a random stencil with a specified structure satisfies the properties with high probability. Now, index the columns by $[n]$ and the rows by $[n] \times [t]$. Define the collection of entries $S_{i,i'} \coloneqq \{H[(i,1),i'], \ldots , H[(i,t),i']\}$. Set $H[(i,j),i] = \star$ for $(i,j) \in [n] \times [t]$. Now, for $i' \neq i \in [t]$, uniformly set one of the $t$ entries in $S_{i,i'}$ to zero while the remaining $t-1$ entries are set to $\star$. We are going to show that such a random stencil $H$ satisfies the bounds we outlined above with high probability.

First, let us show that $\vrk{H^{\otimes t}} \ge n$ for any such stencil $H$. Indeed, consider the $n \times n$ substencil $H'$ in $H^{\otimes t}$ whose rows are $\{((i,1), \ldots , (i,t)) \; | \; i \in [n]\}$ and whose columns are $\{(i, \ldots , i) \in [n]^t\}$. For any row $r_i = ((i,1), \ldots , (i,t))$ and column $c_{i'} = (i', \ldots , i')$, we know that $H^{\otimes t}[r_i, c_{i'}] = \star$ if and only if $H[(i,1),i'] = \ldots = H[(i,t),i'] = \star$. By definition of $H$, that is only possible if and only if $i = i'$. Thus $H'$ is a star diagonal, meaning that it is visibly full rank. Thus $\vrk{H^{\otimes t}} \ge n$.

Now consider the case when $k < t$. We will show that $\text{drk}(H^{\otimes k}) < 16k^2t^2\log{n}$ with high probability for all $k = 1, \ldots , t-1$. Our theorem will then follow by inductively applying Lemma~\ref{lemma:distinct-vrk} on $\vrk{H^{\otimes k}}$. Indeed, fix a positive integer $M_d$ and a $M_d \times M_d$ distinctly full rank substencil $H_d$ in $H^{\otimes k}$. By Lemma~\ref{lemma:upper-tri}, we know that we can reorder the rows and columns of $H_d$ so that $H_d[r_i, c_i] = \star$ and $H_d[r_i, c_{i'}] = 0$ for $i > i'$. For $i > i'$, let $E_{i,i'}$ be the event that $H_d[r_i, c_{i'}] = 0$. By applying a Union Bound, we have that
\begin{equation} \label{eq:event-bound}
    \Pr{E_{i,i'}} \le \sum_{a=1}^k{\Pr{H[r_i^{(a)}, c_{i'}^{(a)}] = 0}} \le \sum_{i=1}^k{\frac{1}{t}} = \frac{k}{t}
\end{equation}
Moreover, notice that the event $H[r_i^{(a)}, c_{i'}^{(a)}] = 0$ depends on at most $t-1$ other such events as any zero entry in $H$ is chosen from a collection of $t$ entries. Since $E_{i,i'}$ is a conjunction of the events $H[r_i^{(a)}, c_{i'}^{(a)}] = 0$ for $a \in [k]$, then $E_{i,i'}$ depends on at most $k(t-1)$ other events of the form $H[r,c] = 0$. Since $H_d$ is distinctly full rank, then each of those $k(t-1)$ other events can only appear at most once among all the events $\{E_{i,i'}\}_{i > i'}$. Thus by a simple greedy algorithm, we can find a subcollection of events $T \subseteq \{E_{i,i'}\}_{i > i'}$ with $\abs{T} \ge \frac{\binom{M_d-1}{2}}{k(t-1)+1} \ge \frac{(M_d-2)^2}{2kt}$
such that all the events in $T$ are mutually independent of each other. Thus by the independence of the events $T$, Inequality~\eqref{eq:event-bound}, and the inequality $k < t$, we find that
\begin{equation*}
    \Pr{\text{$H_d$ is visibly full rank}} \le \Pr{\land_{E \in T}{E}} = \prod_{E \in T}{\Pr{E}} \le \left(\frac{k}{t}\right)^{\abs{T}} \le \left(1 - \frac{1}{t}\right)^{\frac{(M_d-2)^2}{2kt}} \le e^{-\frac{(M_d-2)^2}{2kt^2}}
\end{equation*}
And so by applying a Union Bound over all possible substencils $H_d$ and its orderings of its rows and columns, we find that
\begin{equation*}
    \Pr{\text{dvrk}(H^{\otimes k}) \ge M_d} \le (tn)^{kM_d}n^{kM_d}e^{-\frac{(M_d-2)^2}{2kt^2}} = \left(t^{\frac{k}{\log{n}}}e^{2k - \frac{M_d - 4 + \frac{4}{M_d}}{2kt^2\log{n}}}\right)^{M_d\log{n}}
\end{equation*}
And so if we pick $M_d = 16k^2t^2\log{n}$ and large enough $n$, then we get that $\text{dvrk}(H^{\otimes k}) < 16k^2t^2\log{n}$ with high probability for $k = 1, \ldots , t-1$. From the definition of $\text{drk}(\cdot)$, we see that $\vrk{H} = \text{drk}(H)$. Thus by inductively applying Lemma~\ref{lemma:distinct-vrk} for $k < t$, we deduce that the inequality
\begin{equation*}
    \vrk{H^{\otimes k}} \le 2^{k-1}(k!)^2\prod_{i=1}^k{\text{drk}(H^{\otimes i})} < 2^{k-1}(k!)^2\prod_{i=1}^k{[16i^2t^2\log{n}]} = 2^{5k-1}(k!)^4t^{2k}(\log{n})^k
\end{equation*}
holds with high probability. Thus such a construction $H$ exists.
\end{proof}

\section{Further Directions and Discussion}
\label{further}

Stencils provide an initial framework toward combinatorial methods for effectively lower bounding the rank of a matrix. However, we have seen the limitations of the visible rank with $2$-DRGP stencils as well as small tensor powers of $q$-LCC stencils. We leave the reader with questions that remain open about the current framework and possibilities of imposing further restrictions on the model to obtain sharper lower bounds on the rank for restricted cases.

\begin{enumerate}

    \item While we may have shown that the $k$'th tensor power of a $q$-LCC does not yield better lower bounds for $k \le n^{o(1)}$, this does not rule out the possibility that the visible capacity might yield better lower bounds. In fact, we do not know if there are any stencils for which the visible capacity does not match the lowest possible rank for the stencil. In other words, does there exist a stencil $H$ such that $\rk{\F}{H} > \Upsilon(H)$ for every field $\F$?.\label{further:ark-vc-gap}

    \item In this paper, we shown a polynomial gap between $\rk{\F}{H}$ and $\vrk{H}$ by proving that there are $2$-DRGP stencils $H$ with $\vrk{H} = O(\log{n})$ and $\rk{\F}{H} = \Omega(\sqrt{n})$. On the other hand, can there also be a similar polynomial gap with the quantities $n - \rk{\F}{H}$ and $n - \vrk{H}$? From Proposition~\ref{prop:high-vrk-tp}, we have seen that the visible ranks of the $k$'th tensor power for $k \le n^{o(1)}$ would not suffice to show this polynomial gap. Nonetheless, it still leaves the possibility of using the visible capacity to show this polynomial gap, but we do not know of any methods that can lower bound the visible capacity other than the visible ranks of finite tensor powers. Note that this question is the symmetric spanoid version of Question 2 posed in~\cite{DGGW20}.\label{further:co-gap}

    \item One natural restriction is to consider attaining lower bounds for $\rk{\F}{H}$ over all fields $\F$ such that $\abs{\F} \le M$ for some $M \in \N$. In such cases, methods such as those used in Proposition~\ref{prop:zero-set-ub} would not work. More generally, suppose that we are interested in attaining lower bounds for \emph{$k$-color stencils}, a stencil wherein any algebraic witness of it can only use at most $k$ distinct nonzero entries. Can one consider better combinatorial proxies for the algebraic rank of $k$-color stencils (particularly constant-color stencils)?\smallskip
    
    It is possible to attain asymptotically better lower bounds on the algebraic rank for colored stencils. Indeed, consider a $k$-color stencil $H \in \{0, \star\}^{n \times n}$ where $H[i, i] = 0$ and $H[i, j] = \star$ for $i \neq j$. We know that $\vrk{H} = 2$, but since $H$ is a $k$-color stencil, then one can show by a polynomial argument (which we leave for the reader to try) that any algebraic witness of $H$ has a rank of at least $\Omega_k(n^{1/k})$. \label{further:color-stencils}

    \item One component that stencils do not capture is the distance of a code, but that's partly due to the distance of a code being dependent on the field size. If we consider $k$-color stencils, can one find combinatorial proxies that upper and lower bound the distance of any algebraic witness of the stencil? \label{further:dist-realized}

    \item Can one explicitly construct an $n \times n$ stencil $H$ whose visible rank is robust to small entry corruptions? Any algebraic witness of $H$ would automatically be a rigid matrix. \label{further:robust-vrk}

\end{enumerate}

Finally, our aim in introducing the visible rank is to study combinatorial proxies for the rank of a matrix. Are there other combinatorial quantities associated with a matrix that could yield more effective lower bounds on its rank?

\bibliographystyle{alpha}
\bibliography{vrk}

\end{document}